\newcommand{\addresseshere}{%
  \enddoc@text\let\enddoc@text\relax
}
\author{Michael Freedman, Modjtaba Shokrian Zini, Zhenghan Wang}
\title{Quantum Computing with octonions}
\newcommand\Author{Michael Freedman, Modjtaba Shokrian-Zini, Zhenghan Wang}
\let\Title\@title
\def\ps@mystyle{%
      \let\@oddfoot\@empty\let\@evenfoot\@empty
      \def\@evenhead{\makebox[0pt][l]{\thepage}\hfill\Author\hfill}%
      \def\@oddhead{\hfill\Title\hfill\makebox[0pt][l]{\thepage}}%
      \let\@mkboth\markboth}
\g@addto@macro{\endabstract}{\@setabstract}
\newcommand{\authorfootnotes}{\renewcommand\thefootnote{\@fnsymbol\c@footnote}}%
\renewcommand{\maketitle} 
{ \begingroup \vskip 10pt \begin{center} \large {\bf \@title}
	\vskip 10pt \large \@author \hskip 20pt \@date \end{center}
  \vskip 10pt \endgroup \setcounter{footnote}{0} }
\newcommand{\ket}[1]{\left| #1 \right>} 
\newcommand{\bra}[1]{\left< #1 \right|} 
\let\baraccent=\= 
\renewcommand{\=}[1]{\stackrel{#1}{=}} 
\newtheorem{thm}{Theorem}[section]
\newtheorem{lem}[thm]{Lemma}
\newtheorem{cor}[thm]{Corollary}
\newtheorem{ques}[thm]{Question}
\theoremstyle{definition}
\newtheorem{dfn}[thm]{Definition}
\theoremstyle{remark}
\newtheorem{rmk}[thm]{Remark}
\newcommand{\mbbC}{\mathbb{C}}
\title{\LARGE{\bf
\textsc{Quantum Computing with Octonions
}}}
\begin{document}
\begin{center}
  \LARGE 
   \maketitle \par \bigskip

  \normalsize
  \authorfootnotes
  Michael Freedman \footnote{michaelf@microsoft.com}\textsuperscript{,\hyperref[1]{1}},
  Modjtaba Shokrian-Zini \footnote{shokrian@math.ucsb.edu}\textsuperscript{,\hyperref[3]{3}}, Zhenghan Wang \footnote{zhenghwa@microsoft.com;
  zhenghwa@math.ucsb.edu}\textsuperscript{,\hyperref[2]{2}} \par 
  \bigskip
\end{center}
\begin{abstract}
There are two schools of ``measurement-only quantum computation''. The first (\cite{raussendorf2001one}) using prepared entanglement (cluster states) and the second (\cite{bonderson2008measurement}) using collections of anyons, which according to how they were produced, also have an entanglement pattern. We abstract the common principle behind both approaches and find the notion of a graph or even continuous family of \textbf{equiangular} projections. This notion is the leading character in the paper.  The largest continuous family, in a sense made precise in Corollary \ref{R^2n_classification_equiangular}, is associated with the \textit{octonions} and this example leads to a universal computational scheme.  Adiabatic quantum computation also fits into this rubric as a limiting case: nearby projections are nearly equiangular, so as a gapped ground state space is slowly varied the corrections to unitarity are small.
\end{abstract}
\tableofcontents
\section{Introduction}
Five hundred years ago it was a matter of scientific debate whether a boat could sail into the wind. It was a question with applications to commerce and warfare and relevant to the famous encounter in 1588 between the English fleet and the Spanish Armada. Even without the Bernoulli effect, sailing into the wind can be explained as the composition of two projections  rotating (and shrinking) a force vector. Treating the sail $S$ as a plane the wind vector $v$ is projected to $S^\perp(v)$, the keel $K$, another plane, further projects the force to $K$ where it may be written as $K(S^\perp(v))$. Mathematically this vector can easily be at an oblique angle to $v$. Ships can sail into the wind, Q.E.D. 

Five hundred years on, we are trying to build quantum computers and it is again relevant what transformations can be wrought by compositions of projections, which quantum mechanically represent the consequence of making a measurement. Now the situation is more subtle because in quantum mechanics projection is to an eigenspace of an observable, and which eigenspace is probabilistic. The observation which begins this investigation is that quantum computing to remain unitary must  not leak quantum information into the environment (we later relax this condition to consider some minimal leakage). This imposes a stringent geometric condition on which projection may follow another. The condition is \textit{equiangularity} which we define below. In higher dimensions two $k$-planes in $n$-space intersect with an ordered list of a family of $k$ dihedral angles (real and complex cases are similar). If a vector $\ket{\psi}$ in the first $k$-plane is probabilistically projected (in accordance to the rules of quantum mechanics) to the second $k$-plane or its perpendicular space, all these dihedral angles must be equal to avoid learning some statistical information about $\ket{\psi}$. To see the key point consider the non-equiangular case. In that case, if a state $\ket{\psi}$ lies in a subspace $Q$ and we observe that its projection has fallen into $P$ rather than $P^\perp$, we will rightly suspect that $\ket{\psi}$ made one of the smaller possible angles with $P$. Our a posteriori distribution will be updated from the prior. This is \textit{leakage}. The no-leakage constraint will be formulated and explained in detail.

The paper is structured as follows. In section 2, we examine the equivalence of the information theoretic condition no-leakage and the linear algebraic property of equiangularity. We then restrict ourselves to the case we call strong equiangular pairs, where $P,Q$ are strong equiangular if equiangularity also holds for their complement $P^\perp,Q^\perp$. In section 4, we partially characterize the continuous families of such pairs (see Figure \ref{strongly_equiangular_plot} and \ref{larger_plot}). The characterization is followed by the explicit construction of strong equiangular families, one of which (related to octonions), allows us to build any local unitary gate efficiently. This allows us to build a (universal) computational model in which strong equiangularity is manifest from the outset. 

In section 6, we introduce the abstract framework unifying the core aspects of famous measurement-only models. We observe that equiangularity is a core aspect of all these models. Hence, strong equiangularity should guide any implementation of a measurement based model. It needs to be noted that details of such models (like the universality of, say, measurement only topological model) require their own special tricks, as some have assumptions on which measurements are permitted. But the abstract model in section 6 based on equiangular projections underlies all earlier implementations. Section 7 involves some connections to other topics and ways to obtain new classes of equiangular projections. 

Finally, in section 8, we discuss the origins of this work. Topological protection has become a major theme, but we are interested in other forms of protection of operations and speculate that small molecules may provide a form of \textit{chemical} protection insofar as symmetries provide the rigid structure of representation spaces. In such a paradigm molecular binding acts as measurement. As mentioned, we try to identify in this paper the abstract framework and the components that are needed (equiangular projections being the main one) as a guide to building any measurement-based model in the future.

\section{No-leakage = Equiangularity}\label{No-leakage = Equiangular_section}

We start with a more algebraic point of view on equiangularity, followed by the geometric definition, and then show the equivalence of the two formulations.

In the sequence of projections carried out in a ``forced measurement protocol'', as in \cite{bonderson2008measurement}, each consecutive pair must satisfy a certain property ensuring the ability to retry a prior measurement that did not give the desired outcome. This means no information should leak to the environment, or equivalently, one should not be able to infer anything about the quantum state after a projective measurement of the form $\{P,I-P\}$ other than whether the state is now in the subspace $P$ or $I-P$.

More precisely, no leakage of information is equivalent to reversibility of the operation. In quantum mechanics, a unitary map ensures reversibility. As one performs a measurement which causes the state to move from $Q$ to $P$ and back to $Q$, no-leakage requires that the state be changed by a map which is proportional to a unitary, with positive scale in $[0,1]$ (the scale is there as the projections inevitably decrease the norm). Let us call such pairs of projections the \textit{no-leakage} pairs. In fact for the shortest of such loops $PQP$, the corresponding unitary is the identity map.

No-leakage condition implies that sequences of projections $QP_kP_{k-1}\ldots P_1Q$ of consecutive no-leakage pairs give a unitary transformation up to some scale of the states inside $Q$. The abundance of the resulting unitaries will provide enough gates to perform universal quantum computation.

In this paper, projections are always hermitian (or symmetric if on $\mathbb{R}^n$), and will be referred to by the same notation as the corresponding subspace.  Hilbert spaces in this paper are always finitely dimensional.
\begin{dfn}
Two subspaces $P,Q$ of a Hilbert space $H$ are equiangular if the minimum of $\arccos|(r_P,r_Q)|$, where $(\cdot,\cdot)$ is the inner product, over all unit vectors $r_P$ in $P$, is independent of $r_Q$, and similarly if $r_P$ is fixed instead.
\end{dfn}
How to compute the minimum? 
\begin{lem}
\begin{align}
\min_{r_P} \arccos|(r_P,r_Q)| = \arccos \frac{|(Pr_Q,r_Q)|}{||Pr_Q||}=\arccos(||Pr_Q||)
\end{align}
\end{lem}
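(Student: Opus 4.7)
The plan is to reduce both equalities to a single Cauchy--Schwarz calculation, exploiting the fact that $P$ is an orthogonal projection.

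First I would observe that $\arccos$ is strictly decreasing on $[0,1]$, so $\min_{r_P} \arccos|(r_P,r_Q)|$ is achieved precisely when $|(r_P,r_Q)|$ is maximized over unit vectors $r_P\in P$. For any such $r_P$, since $P$ is self-adjoint and $Pr_P=r_P$, we rewrite
\begin{align}
(r_P,r_Q)=(Pr_P,r_Q)=(r_P,Pr_Q).
\end{align}
By Cauchy--Schwarz, $|(r_P,Pr_Q)|\le \|r_P\|\cdot\|Pr_Q\|=\|Pr_Q\|$, with equality attained at $r_P=Pr_Q/\|Pr_Q\|$ (when $Pr_Q\ne 0$). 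This gives the second equality $\min_{r_P}\arccos|(r_P,r_Q)|=\arccos(\|Pr_Q\|)$.

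Next I would verify the first equality by a direct computation. Decomposing $r_Q=Pr_Q+(I-P)r_Q$ with the two summands orthogonal, we get
\begin{align}
(Pr_Q,r_Q)=(Pr_Q,Pr_Q)+(Pr_Q,(I-P)r_Q)=\|Pr_Q\|^2,
\end{align}
which is real and nonnegative. Hence $|(Pr_Q,r_Q)|/\|Pr_Q\|=\|Pr_Q\|$, matching the third expression. The only subtle point is the degenerate case $Pr_Q=0$: then the middle expression is formally $0/0$, but both endpoints degenerate consistently to $\arccos(0)=\pi/2$, corresponding to $r_Q\perp P$, so the identification is unambiguous upon adopting this convention.

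The argument is essentially a one-line Cauchy--Schwarz, so there is no real obstacle; the only thing to be careful about is the degenerate case and the observation that self-adjointness of $P$ lets us move it onto $r_Q$, which is what produces the closed-form maximizer.
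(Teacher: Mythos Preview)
Your proof is correct and follows essentially the same approach as the paper: decompose $r_Q=Pr_Q+(I-P)r_Q$ to reduce $(r_P,r_Q)$ to $(r_P,Pr_Q)$, then bound by Cauchy--Schwarz. Your write-up is in fact more thorough, since you explicitly verify the middle expression and address the degenerate case $Pr_Q=0$, both of which the paper leaves implicit.
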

\begin{proof}
Write $r_Q=Pr_Q+(I-P)r_Q$, then $(r_P,r_Q)=(r_P,Pr_Q)$. The minimum of the $\arccos$ above corresponds to the maximum of the absolute value of the inner product and clearly $|(r_P,Pr_Q)| \le ||Pr_Q||$.
\end{proof}
The angle above $\theta_{P,Q}$ will be called the \textit{dihedral angle} between $P,Q$. For equiangular projections, we have the following:
\begin{thm}\label{equiangularity_PQP}
Equiangularity of $P,Q$ implies $PQP=\alpha^2 P$ and $QPQ=\alpha^2 Q$, where $0<\alpha=\cos(\theta_{P,Q})<1$, meaning they act as a scalar on the image of each other.
\end{thm}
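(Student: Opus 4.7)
The plan is to translate the equiangularity hypothesis into a statement about the quadratic form of $QPQ$ on the subspace $Q$, and then promote this to an operator identity via the polarization identity.

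First, I would invoke the preceding lemma to rewrite equiangularity as $\|Pr_Q\| = \alpha$ for every unit vector $r_Q \in Q$, where $\alpha = \cos(\theta_{P,Q})$. Squaring and using $P^* = P = P^2$, this becomes $(r_Q, Pr_Q) = \alpha^2$. Since $Qr_Q = r_Q$ for $r_Q \in Q$, this is the same as $(r_Q, QPQ\,r_Q) = \alpha^2 \|r_Q\|^2$. On $Q^\perp$ both $QPQ$ and $\alpha^2 Q$ vanish, so the self-adjoint operator $QPQ - \alpha^2 Q$ has identically zero quadratic form on $H$. The polarization identity (a self-adjoint operator is determined by its quadratic form) then yields $QPQ = \alpha^2 Q$.

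The companion identity $PQP = \alpha^2 P$ follows by the same argument using the other half of the equiangularity definition (fix $r_P$, vary $r_Q$). A small bookkeeping point is that these two halves a priori produce two constants $\alpha_P,\alpha_Q$; that they coincide can be seen from $\text{tr}(QPQ) = \text{tr}(PQP)$ (cyclicity of trace) together with $\dim P = \dim Q$, which follows because in the non-degenerate range $0 < \alpha < 1$ the intersections $Q \cap P^\perp$ and $P \cap Q^\perp$ are trivial, so the restrictions $P|_{Q} : Q \to P$ and $Q|_{P} : P \to Q$ are both injective.

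The only substantive step is recognizing the constant-norm condition as the quadratic form of $QPQ$; after that, polarization finishes the argument in one line. The boundary cases $\alpha = 0$ (giving $P \perp Q$) and $\alpha = 1$ (giving $P = Q$) are the degenerate situations excluded by the strict inequalities in the theorem's statement.
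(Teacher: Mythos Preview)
Your proof is correct and takes a genuinely different route from the paper's. The paper diagonalizes $Q$, writes $P$ in $2\times 2$ block form with top-left block $U$, and then argues via the eigenvalues of $U$ (using $U-U^2\ge 0$ from $P^2=P$) that $U=\alpha^2 I$, whence $QPQ=\alpha^2 Q$; the equality $\alpha=\beta$ is then read off by computing $PQP$ explicitly in that same block form. You instead observe directly that $\|Pr_Q\|^2=(r_Q,QPQ\,r_Q)$, so the constant-norm hypothesis says the self-adjoint operator $QPQ-\alpha^2 Q$ has vanishing quadratic form, and polarization finishes it; for $\alpha=\beta$ you use $\mathrm{tr}(QPQ)=\mathrm{tr}(PQP)$ together with $\dim P=\dim Q$. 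Your argument is shorter and coordinate-free. The paper's block-matrix route, while heavier here, is not wasted: the explicit form \eqref{P&Q_matrix_form} is reused immediately in Corollary~\ref{equiangular_matrix_form} and throughout Sections~3--5, so the paper is really setting up machinery it needs anyway. One small presentational point: your injectivity argument for $\dim P=\dim Q$ actually uses \emph{both} constants being positive (one for each injection $P|_Q$ and $Q|_P$), not a single ``$0<\alpha<1$''; this is harmless since nondegeneracy is assumed symmetrically, but worth stating explicitly.
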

\begin{proof}
Equiangular means the norms $||Qr_P||,||Pr_Q||$ are constants. Assume $Q$ has rank $d$. By a unitary transformation, which preserves inner product, we diagonalize $Q$. $Q,P$ become: 
\begin{align}\label{P&Q_matrix_form}
Q=
\begin{pmatrix}
I_{d \times d} & 0 \\
0 & 0
\end{pmatrix},P= 
\begin{pmatrix}
U_{d \times d} & B \\
B^\dagger & D
\end{pmatrix}
\end{align}
Then
\begin{align}\label{r_Q_to_v}
\frac{|(Pr_Q,r_Q)|}{||Pr_Q||} = \frac{|(PQv,Qv)|}{||PQv||.||Qv||}
\end{align}
for any $v$ in the Hilbert space, as $\frac{Qv}{||Qv||} \in Q$ and equal to some $r_Q$. The above can be rewritten as
\begin{align}\label{v_to_v_Q}
\frac{|(PQv,Qv)|}{||PQv||.||Qv||}=\frac{|(Qv,QPQv)|}{||PQv||.||Qv||}=
\frac{|(v_Q,Uv_Q)|}{(||Uv_Q||^2+||B^\dagger v_Q||^2)^{\frac{1}{2}}||v_Q||},
\end{align}
where $v_Q=Qv$. Notice $P$ is a projection, and $P^2=P,P^\dagger=P$ imply:
\begin{align}\label{I-P=P^2}
U=U^\dagger, \ D=D^\dagger, \\\label{2-P=P^2}
BB^\dagger+UU^\dagger=U, B^\dagger B+DD^\dagger=D.
\end{align}
The second line implies $||B^\dagger v_Q||^2=(B^\dagger v_Q,B^\dagger v_Q)=(v_Q,BB^\dagger v_Q)=(v_Q,(U-UU^\dagger)v_Q)$. Hence the denominator becomes
\begin{align}\label{U_condition}
\frac{|(v_Q,Uv_Q)|}{(||Uv_Q||^2+||B^\dagger v_Q||^2)^{\frac{1}{2}}||v_Q||}=\frac{|(v_Q,Uv_Q)|}{|(v_Q,Uv_Q)|^\frac{1}{2}.||v_Q||}=\frac{|(v_Q,Uv_Q)|^\frac{1}{2}}{||v_Q||}
\end{align}
The above has to be some constant $\alpha=\cos(\theta_{P,Q})$. Since $U$ is a hermitian matrix, we can consider the unit eigenvectors of $U$ called $v_i$ with real eigenvalues $\lambda_i$ for $1 \le i \le d$. We notice that by (\ref{I-P=P^2},\ref{2-P=P^2}), $U-U^2$ is a positive matrix, meaning that $\lambda_i-\lambda_i^2\ge 0 \implies \lambda_i \ge 0$. In turn, by (\ref{U_condition}), choosing $v_Q=v_i$ gives us $\lambda_i=\alpha^2$, which means all $\lambda_i$ are equal and $U = \alpha^2 I_{d\times d}$ for some $\alpha>0$. Hence $QPQ=\alpha^2 Q$. Similarly, as the equiangularity condition is symmetric for $P,Q$ we get $PQP=\beta^2P$ for some $\beta>0$. Using (\ref{P&Q_matrix_form}) to calculate explicitly $PQP$, one obtains $\alpha=\beta$; in particular, $\theta_{P,Q}=\theta_{Q,P}$ as expected.
\end{proof}
As a corollary to the above, we arrive at the following characterization of equiangularity in matrix forms:
\begin{cor}\label{equiangular_matrix_form}
Assuming the same settings in \ref{equiangularity_PQP}, the matrices $P,Q$ after diagonalization of $Q$, are of the form:
\begin{align}\label{equiangular_matrix_forms}
Q=
\begin{pmatrix}
I_{d \times d} & 0 \\
0 & 0 
\end{pmatrix}, \ \ 
P=
\begin{pmatrix}
\alpha^2 I_{d\times d} & B \\
B^\dagger & D
\end{pmatrix},
\end{align}
where 
\begin{itemize}
    \item $BB^\dagger=(\alpha-\alpha^2)I_{d \times d},$
    \item $ B^\dagger B=\alpha^2D,$
    \item $D^2=(1-\alpha^2)D.$
\end{itemize}
In particular, $P$ and $Q$ have the same rank and if $\alpha=1$ then $P=Q$. Also, $P,Q$ are equiangular if they satisfy the matrix form and equations above.
\end{cor}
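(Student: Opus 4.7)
The plan is to bootstrap from the proof of Theorem \ref{equiangularity_PQP}, which already delivers the block form of $Q$ as $\mathrm{diag}(I_{d\times d},0)$ together with $U=\alpha^2 I_{d\times d}$ inside $P$. So the matrix form of the corollary is essentially free. What remains is (i) extracting the three algebraic identities, (ii) arguing $\mathrm{rank}\,P=\mathrm{rank}\,Q$, (iii) handling the edge case $\alpha=1$, and (iv) proving the converse direction.

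For (i), I would simply expand the block equations $P^2=P$ and $PQP=\alpha^2 P$. The $(1,1)$ block of $P^2=P$ relates $BB^\dagger$, $U^2=\alpha^4 I$, and $U=\alpha^2 I$, yielding the first identity. The $(2,2)$ block of $PQP=\alpha^2 P$ reads $B^\dagger B=\alpha^2 D$. Substituting this into the $(2,2)$ block of $P^2=P$, namely $B^\dagger B+D^2=D$, gives $D^2=(1-\alpha^2)D$. This is pure bookkeeping, with no real obstacle.

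For (ii), I would use that $\alpha>0$ and $PQP=\alpha^2 P$ imply $\mathrm{image}\,P\subseteq\mathrm{image}\,PQ\subseteq\mathrm{image}\,P$, hence $\mathrm{rank}\,P=\mathrm{rank}\,PQ\le\mathrm{rank}\,Q$; swapping $P$ and $Q$ via $QPQ=\alpha^2 Q$ gives the reverse inequality. For (iii), setting $\alpha=1$ in the first identity forces $BB^\dagger=0$ hence $B=0$, and $D^2=0$ with $D$ Hermitian forces $D=0$, so $P=Q$.

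The converse (iv) is the only step with any content. Assume the block form and the three identities; I would verify by direct block multiplication that they force $PQP=\alpha^2 P$ and $QPQ=\alpha^2 Q$. The identity $QPQ=\alpha^2 Q$ is automatic from $U=\alpha^2 I$, while $PQP=\alpha^2 P$ uses exactly $B^\dagger B=\alpha^2 D$ in its $(2,2)$ block. Given these, for any unit $r_Q\in Q$ one computes $\|Pr_Q\|^2=\langle r_Q,Pr_Q\rangle=\langle r_Q,QPQr_Q\rangle=\alpha^2$, and symmetrically $\|Qr_P\|=\alpha$ for unit $r_P\in P$; by the lemma preceding Theorem \ref{equiangularity_PQP} this says the minimum angle is $\arccos(\alpha)$ independently of the chosen vector, which is equiangularity. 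The only mildly delicate point is checking that the three identities are jointly \emph{sufficient} (not merely necessary) to reconstitute $PQP=\alpha^2 P$ and $QPQ=\alpha^2 Q$; but the block multiplication shows they are exactly the right constraints.
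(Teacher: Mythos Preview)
Your proposal is correct and follows essentially the same line as the paper for parts (i), (iii), and (iv): the paper likewise reads off the three identities from the block expansions of $P^2=P$ and $PQP=\alpha^2 P$, and for the converse it too just plugs the block form back into the norm computation $\|Pr_Q\|$, $\|Qr_P\|$ to recover the constant $\alpha$.

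The one genuine difference is your rank argument (ii). The paper instead computes $\ker P$ directly from the block form: writing $v=(v_Q,v_{I-Q})$, the equations $Pv=0$ reduce (using $B^\dagger B=\alpha^2 D$) to the single condition $\alpha^2 v_Q=-Bv_{I-Q}$, so $\dim\ker P=\dim\ker Q$. Your route via $\mathrm{image}\,P=\mathrm{image}\,PQP\subseteq\mathrm{image}\,PQ$ together with the symmetric inequality from $QPQ=\alpha^2 Q$ is cleaner and coordinate-free, and it does not use the block identities at all---only the scalar equations $PQP=\alpha^2 P$, $QPQ=\alpha^2 Q$ with $\alpha>0$. The paper's computation, on the other hand, exhibits $\ker P$ explicitly, which is occasionally useful downstream. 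Both are perfectly valid.
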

\begin{proof}
The first equation is derived from the fact that $U=\alpha^2 I_{d \times d}$ and equation (\ref{2-P=P^2}). The third equation is derived from the second and (\ref{2-P=P^2}). The second itself is derived by calculating $PQP$ in its matrix form and comparing its bottom right block $B^\dagger B$ to that of $\alpha^2P$ which is $\alpha^2D$. 

Take a vector $v=(v_Q,v_{I-Q})$ where the first coordinate is a $d$ dimensional vector and the rest lies in the kernel of $Q$. We want to find the dimension of the kernel of $P$. To have $Pv=0$, we must have $\alpha^2 v_Q=-Bv_{I-Q}, B^\dagger v_{Q}=-Dv_{I-Q}$. This reduces to one equation: $B^\dagger Bv_{I-Q}=\alpha^2 Dv_{I-Q}$ which is always true by the second equation in the statement. Hence the kernel has the same dimension as the kernel of $Q$.

Finally, it is easy to take the matrices in the statement and compute (\ref{r_Q_to_v},\ref{v_to_v_Q}) to show that indeed $||Pr_Q||,||Qr_P||$ is always a constant $\alpha$ as the equiangular definition requires.
\end{proof}

\begin{rmk}\label{equiangular_PQP_equivalence}
By analyzing the proof above, it can be seen that for two projections $P,Q$ satisfying $PQP=\alpha^2P,QPQ=\beta^2Q$, the projections are equiangular and $\alpha=\beta$ and the matrix form and equations in \ref{equiangular_matrix_form} hold for $P,Q$. Indeed, one can first diagonalize $Q$ and from $QPQ=\beta^2Q$ infer that $U=\beta^2 I_{d\times d}$ and get the rest of the results from $PQP=\alpha^2P$. Thus, equiangularity is also equivalent to $PQP=\alpha^2P,QPQ=\beta^2P$.
\end{rmk}

\begin{rmk}\label{No_intersection_remark}
An easy observable fact, yet very useful as we shall see in later sections, is that the equation $PQP=\alpha^2 P$ implies that $P$ and $Q$ do not intersect on a line unless they are equal. If not, there is $v$ such that $Pv=Qv=v$, which implies $\alpha=1$, i.e. $P=Q$. Hence, equiangularity in particular implies no nontrivial intersection of the planes involved.
\end{rmk}

Next, the notion of no-leakage pair of projections is described and proved to be equivalent to the notion equiangular pairs.

\begin{dfn}\label{no_leakage_dfn}
The pair of projections $P,Q$ are no-leakage if $PQP$ is a unitary map up to some scale on the image of $P$ and similarly for $QPQ$.
\end{dfn}
\begin{thm}
A no-leakage pair is the same as an equiangular pair.
\end{thm}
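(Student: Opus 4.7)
The plan is to reduce the theorem to the algebraic characterization already in hand. By Theorem \ref{equiangularity_PQP} together with Remark \ref{equiangular_PQP_equivalence}, equiangularity of $P,Q$ is equivalent to the pair of algebraic identities $PQP = \alpha^2 P$ and $QPQ = \beta^2 Q$ for some nonnegative scalars (and one then automatically obtains $\alpha = \beta$). So it suffices to show that the no-leakage condition of Definition \ref{no_leakage_dfn} is equivalent to these two identities.

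The easy direction (equiangular $\Rightarrow$ no-leakage) is immediate: under $PQP = \alpha^2 P$, the operator $PQP$ acts on $\mathrm{image}(P)$ as multiplication by $\alpha^2 \in [0,1]$, which is trivially the identity unitary rescaled by the nonnegative constant $\alpha^2$. The same reasoning applies symmetrically to $QPQ$.

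For the converse, the key observation is that $PQP$ is automatically self-adjoint, since $(PQP)^{\dagger}=PQP$, and positive semidefinite, since $\langle PQPv,v\rangle=\langle QPv,Pv\rangle=\|QPv\|^{2}\ge 0$. By the no-leakage hypothesis, its restriction to $\mathrm{image}(P)$ equals $\lambda U$ for some scalar $\lambda$ and some unitary $U$ on that subspace. A positive semidefinite operator of the form $\lambda U$ forces $U$'s eigenvalues, which lie on the unit circle, to all share a single phase; hence $U$ is a phase times the identity, and $\lambda U = c\,I$ for some $c\ge 0$. Since $PQP$ vanishes on $(\mathrm{image}\,P)^{\perp}$, this upgrades to $PQP = c\,P$ globally. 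The symmetric hypothesis yields $QPQ=c'\,Q$, and then Remark \ref{equiangular_PQP_equivalence} concludes equiangularity (with $c=c'=\alpha^{2}$).

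The main subtlety I anticipate is purely one of interpretation: pinning down what ``unitary map up to some scale on the image of $P$'' means in Definition \ref{no_leakage_dfn} — specifically, that the restriction $PQP|_{\mathrm{image}(P)}$ is a scalar multiple of a unitary endomorphism of $\mathrm{image}(P)$. Once that is granted, the only real content is the short spectral argument that a positive semidefinite scalar multiple of a unitary is a nonnegative scalar multiple of the identity. The degenerate case $c=0$ (where $QP=0$ and $P\perp Q$) corresponds to equiangular projections with dihedral angle $\pi/2$, so it does not need separate treatment.
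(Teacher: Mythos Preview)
Your proof is correct and follows essentially the same approach as the paper: both arguments reduce to showing that a positive operator which is simultaneously a scalar multiple of a unitary must be a scalar multiple of the identity, and then invoke Remark~\ref{equiangular_PQP_equivalence}. The only cosmetic difference is that the paper diagonalizes $Q$ and runs the argument on the upper-left block $U$ of $P$ (using $U=U^\dagger$ and $U-U^2\ge 0$ from the projection identities), whereas you work coordinate-free on $PQP|_{\mathrm{image}(P)}$ via the observation $\langle PQPv,v\rangle=\|QPv\|^2\ge 0$; the spectral content is identical.
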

\begin{proof}
It is obvious that an equiangular pair is a no-leakage pair as the unitary map up to some scale is either $\alpha^2I_{d\times d}$ or $(1-\alpha^2)I_{d\times d}$. 

For a no-leakage pair, similar to previous theorems, we diagonalize $Q$ and see that $U$ must be a unitary up to some scale. Hence $UU^\dagger = \alpha^2 I_{d \times d}$ for some $\alpha>0$. But $U$ is hermitian and has only real eigenvalues, therefore only possible eigenvalues are $\pm \alpha$. Further, by (\ref{2-P=P^2}), we know that $U-U^2$ is positive, which means $U$ must be positive. Hence all eigenvalues of $U$ are $\alpha>0$ which implies that $U=\alpha I_{d \times d}$. 

Next, for the same condition on $PQP$, with the same argument above by exchanging the place of $P,Q$, we get $PQP=\beta^2 P$. By \ref{equiangular_PQP_equivalence}, we are done.
\end{proof}

\begin{rmk}
As pointed out by a referee, the discussion above can be understood in terms of invariant two dimensional subspaces. Indeed, two projections can always be simultaneously block-diagonalized to blocks of size at most $2\times 2$. This is a standard linear algebra fact. Take an eigenvector with positive eigenvalue of the nonnegative matrix $PQP$ such as $v$. Notice $Pv=v$. Then $v$ gives a pair $v,Qv$ which forms a subspace of dimension at most $2$ preserved by both $P$ and $Q$. We can discard this subspace and perform induction. If no positive eigenvalue exists, then $P$ and $Q$ are orthogonal. Hence, we can decompose the space into at most two dimensional subspaces which are mutually orthogonal, and invariant under both $P$ and $Q$. Equiangularity or no-leakage each hold if and only if the single angle that occurs in each block is the same for all blocks.
\end{rmk}

\section{Characterization of strong equiangularity}

A more restricted version of equiangular projections happens when not only $P,Q$ are equiangular but also $I-P,I-Q$ are equiangular. By \ref{equiangular_PQP_equivalence}, this is equivalent to $(I-Q)(I-P)(I-Q)=\gamma^2(I-Q),(I-P)(I-Q)(I-P)=\gamma^2(I-P)$.

\begin{dfn} \label{strong_equiangular}
Projections $P,Q$ are \textit{strongly} equiangular if they and their complements are equiangular.
\end{dfn}

By \ref{equiangular_matrix_form}, direct calculations result in $D=(1-\alpha^2)I$. More importantly, is the dimension of $D$. So far, we have only seen that $P,Q$ must have the same rank. Let $n$ be the dimension of the Hilbert space. We already had $BB^\dagger=(\alpha^2-\alpha^4) I_{d\times d}$. Hence, $B^\dagger$ can not have a kernel which means $d\le n-d$. But once $I-P,I-Q$ are equiangular, symmetrically, $B^\dagger B= (\alpha^2-\alpha^4) I$, which means $B$ cannot have a kernel, hence $d \ge n-d$. Therefore $B$ is a square matrix which is a unitary up to some scale and $d= \frac{n}{2}$. It turns out that this condition on $B$ along with the previous conditions in \ref{equiangular_matrix_form}, also imply that $P,Q$ are strongly equiangular. We summarize the findings into:

\begin{thm}\label{strongly_equiangular_thm}
$P,Q$ are strongly equiangular if and only if by a unitary transformation they become:
\begin{align}\label{strongly_equiangular_matrix_form}
Q=
\begin{pmatrix}
I_{d \times d} & 0 \\
0 & 0 
\end{pmatrix}, \ \ 
P=
\begin{pmatrix}
\alpha^2 I_{d\times d} & B \\
B^\dagger & (1-\alpha^2)I_{d\times d}
\end{pmatrix},
\end{align}
where $0<\alpha<1$, $B$ is a unitary up to scale $\alpha^2-\alpha^4$ and $P,Q$ have rank $d$, half of the Hilbert space dimension.
\end{thm}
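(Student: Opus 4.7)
The plan is to prove both directions in succession, the forward one leveraging the already-established Corollary~\ref{equiangular_matrix_form} twice (once for $(P,Q)$ and once for $(I-P,I-Q)$), and the reverse one by a direct computation invoking Remark~\ref{equiangular_PQP_equivalence}.

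For the forward direction, I would begin by applying Corollary~\ref{equiangular_matrix_form} to the equiangular pair $(P,Q)$. After a unitary change of basis diagonalizing $Q$, this already gives
\[
Q=\begin{pmatrix} I_{d\times d} & 0\\ 0 & 0\end{pmatrix},\qquad P=\begin{pmatrix}\alpha^2 I_{d\times d} & B\\ B^\dagger & D\end{pmatrix},
\]
with $BB^\dagger=(\alpha^2-\alpha^4)I_{d\times d}$, $B^\dagger B=\alpha^2 D$, and $D^2=(1-\alpha^2)D$. The next step is to feed the pair $(I-P,I-Q)$, which is equiangular by hypothesis, through the same corollary. The trick is that in the current basis $I-Q$ is already block-diagonal with $I_{(n-d)\times(n-d)}$ in the bottom-right corner, so after a trivial permutation unitary swapping the two blocks the corollary applies directly and forces the analogous ``$\alpha^2 I$'' block, namely $I_{(n-d)\times(n-d)}-D$, to be a scalar multiple of the identity. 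Writing $I-D=\beta^2 I_{(n-d)\times(n-d)}$ and substituting into $D^2=(1-\alpha^2)D$ gives $(1-\beta^2)^2=(1-\alpha^2)(1-\beta^2)$; ruling out $D=0$ (which would force $B=0$ and then $P=\alpha^2 Q$, contradicting that $P$ is a nonzero projection with $0<\alpha<1$), I conclude $\beta=\alpha$ and hence $D=(1-\alpha^2)I_{(n-d)\times(n-d)}$.

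Now the second relation becomes $B^\dagger B=\alpha^2(1-\alpha^2)I_{(n-d)\times(n-d)}$, which together with $BB^\dagger=\alpha^2(1-\alpha^2)I_{d\times d}$ shows that $B\colon \mathbb{C}^{n-d}\to\mathbb{C}^d$ is simultaneously injective (from the former, since $\alpha\in(0,1)$ makes the constant nonzero) and surjective (from the latter). Hence $B$ is square, $d=n/2$, and $B$ is a unitary up to the scale $\alpha^2-\alpha^4$, as desired.

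For the reverse direction, given the stated matrix form I would simply compute $QPQ$, $PQP$, $(I-Q)(I-P)(I-Q)$, and $(I-P)(I-Q)(I-P)$ in block form. The identities $BB^\dagger=B^\dagger B=(\alpha^2-\alpha^4)I$ together with $D=(1-\alpha^2)I$ make each of these four products collapse to $\alpha^2 Q$, $\alpha^2 P$, $\alpha^2(I-Q)$, and $\alpha^2(I-P)$ respectively, so Remark~\ref{equiangular_PQP_equivalence} delivers strong equiangularity. The only place any subtlety enters is in the forward direction, in the ``$D=0$'' dichotomy mentioned above; once that is dispensed with, the rest of the argument is bookkeeping and the converse is purely mechanical.
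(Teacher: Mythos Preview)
Your proposal is correct and follows essentially the same route as the paper: apply Corollary~\ref{equiangular_matrix_form} to $(P,Q)$ and then to $(I-P,I-Q)$ to force $D=(1-\alpha^2)I$, use the resulting identities $BB^\dagger=(\alpha^2-\alpha^4)I_{d\times d}$ and $B^\dagger B=(\alpha^2-\alpha^4)I_{(n-d)\times(n-d)}$ to conclude $B$ is square and $d=n/2$, and verify the converse by direct block computation. Your treatment is in fact slightly more explicit than the paper's (which dispatches $D=(1-\alpha^2)I$ with the phrase ``direct calculations''), particularly in isolating and disposing of the degenerate $D=0$ case.
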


\begin{rmk}\label{construction_thru_directsum_tensor}
By checking the requirement $PQP=\alpha^2P,QPQ=\alpha^2Q$, it can be easily shown that tensor product of equiangular pairs $(P_1,Q_1),(P_2,Q_2)$ is equiangular, and direct sum of equiangular is equiangular if and only if the scalar $\alpha$ of both pairs are equal. Further, only direct sum of strongly equiangular pairs is strongly equiangular while tensor product never is, as the rank can not be half of the Hilbert space dimension. 

Therefore, tensor product and direct sum are two constructions for obtaining equiangular projections in higher dimensions, where the former can be always used while the latter is more useful in discrete cases due to its restriction.
\end{rmk}

As we shall see, the complex, quaternions, and octonions will provide examples of a continuous family of mutually strongly equiangular projections.  Moreover, only octonions can be used for universal quantum computation.

For a strongly equiangular pair, each one is conjugate to the other by some unitary transformation as they have the same rank. We have the following lemma regarding this unitary conjugator.
\begin{lem}\label{lemma_U}
For a strongly equiangular pair $P,Q$ with rank $d$ as in (\ref{strongly_equiangular_matrix_form}), the unitary matrix that expresses $P$ in the orthogonal basis provided by $Q$ is 
\begin{align}
U=
\begin{pmatrix}
U_{11} & U_{12} \\
U_{21} & U_{22}
\end{pmatrix},
\end{align}
where $U_{ij}$ are $d\times d$ blocks and $U_{11},U_{22}$ are unitaries up to scale $\alpha$.

The opposite holds as well: If the unitary $U$ has the above property, then $Q,UQU^\dagger$ is a strongly equiangular pair for a diagonalized $Q$.
\end{lem}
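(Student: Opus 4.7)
The plan is to reduce this to block matrix algebra, leveraging the normal form from Theorem \ref{strongly_equiangular_thm}. Because $P$ and $Q$ have the same rank $d = n/2$, there is a unitary $U$ with $P = UQU^\dagger$, and the lemma is asking for a clean characterization of the block structure of such $U$.

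For the forward direction, I would write $U$ as a $2\times 2$ block matrix of $d \times d$ blocks and expand $UQU^\dagger$. Since $Q$ is diagonalized with only the top-left identity block, the product collapses to
\[
UQU^\dagger = \begin{pmatrix} U_{11}U_{11}^\dagger & U_{11}U_{21}^\dagger \\ U_{21}U_{11}^\dagger & U_{21}U_{21}^\dagger \end{pmatrix}.
\]
Comparing the top-left block with the strongly equiangular form $P = \begin{pmatrix} \alpha^2 I_d & B \\ B^\dagger & (1-\alpha^2)I_d \end{pmatrix}$ immediately gives $U_{11}U_{11}^\dagger = \alpha^2 I_d$, so $U_{11}$ is unitary up to scale $\alpha$. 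To get the analogous statement for $U_{22}$, I would use both sides of the unitarity relation for $U$. From $UU^\dagger = I$ one has $U_{12}U_{12}^\dagger = (1-\alpha^2)I_d$; since $U_{12}$ is square and of full rank, this forces $U_{12}^\dagger U_{12} = (1-\alpha^2)I_d$ as well; then $U^\dagger U = I$ supplies $U_{22}^\dagger U_{22} = I_d - U_{12}^\dagger U_{12} = \alpha^2 I_d$.

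The converse runs the same computation in reverse. Assume $U_{11}, U_{22}$ are each unitary up to scale $\alpha$. Unitarity of $U$ combined with the square/full-rank trick above forces $U_{12}$ and $U_{21}$ to be unitary up to scale $\sqrt{1-\alpha^2}$. Plugging back into the expansion of $UQU^\dagger$, the diagonal blocks become $\alpha^2 I_d$ and $(1-\alpha^2)I_d$, while the off-diagonal block $B := U_{11}U_{21}^\dagger$ satisfies $BB^\dagger = U_{11}(U_{21}^\dagger U_{21})U_{11}^\dagger = (\alpha^2-\alpha^4)I_d$. This is exactly the form (\ref{strongly_equiangular_matrix_form}), so $Q$ and $UQU^\dagger$ are strongly equiangular by Theorem \ref{strongly_equiangular_thm}.

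There is no real obstacle; the only mildly subtle point is that the natural expansion of $UQU^\dagger$ produces $U_{11}U_{11}^\dagger$ and $U_{21}U_{21}^\dagger$ directly, so the condition on $U_{22}$ must be extracted indirectly through $U_{12}$, using that for square matrices right-isometry and left-isometry coincide.
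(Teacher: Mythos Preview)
Your proof is correct and follows essentially the same block-matrix approach as the paper. The only minor difference is in how you reach the $U_{22}$ condition: the paper reads $U_{21}U_{21}^\dagger=(1-\alpha^2)I_d$ directly off the bottom-right block of $UQU^\dagger=P$ and then uses the bottom row of $UU^\dagger=I$ to get $U_{22}U_{22}^\dagger=\alpha^2 I_d$, whereas you take the slightly longer route through $U_{12}$, the square/full-rank trick, and $U^\dagger U=I$---but this is a cosmetic difference, not a genuinely different argument.
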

\begin{proof}
It is simple linear algebra to see that for $P=UQU^\dagger$, with $P,Q$ in the form (\ref{strongly_equiangular_matrix_form}), one needs to have $U_{11}U_{11}^\dagger=\alpha^2 I_{d\times d}$ while $U_{12}U_{12}^\dagger=(1-\alpha^2)I_{d\times d}$. On the other hand $U$ itself must be unitary, which means that $UU^\dagger = I_{2d \times 2d}$. Computing the bottom right block gives $U_{22}U_{22}^\dagger= \alpha^2 I_{d \times d}$. 

By direct calculations, for unitary $U$ satisfying those properties, the projections $UQU^\dagger,Q$ are found to be of the form (\ref{strongly_equiangular_matrix_form}) for a diagonalized $Q$.
\end{proof}
Inspired by the lemma, we define a set:
\begin{align}
S=\{U \in U(2d)| U_{11},U_{22} \text{ are unitaries up to some scale } 0<\alpha<1 \}.
\end{align}
The above has the following consequence for collections of mutually strongly equiangular projections.
\begin{cor}\label{cor_S}
Consider a collection of projections $\{P_i\}_{i\in \mathcal{I}}$ with a distinguished diagonalized $P_0$ of rank $d$ and $P_i=U_i P_0 U_i^\dagger$ with $U_i$ unitaries and $U_0=I$. Then all pairs $\{P_i\}_{i\in \mathcal{I}}$ are strongly equiangular if and only if $U_i^\dagger U_j \in S, \ \forall i,j \in \mathcal{I}$. 
\end{cor}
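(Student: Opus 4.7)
The plan is to reduce the statement for every pair $(P_i,P_j)$ to an application of Lemma \ref{lemma_U}, using the fact that strong equiangularity is a unitary-invariant property of an unordered pair of projections.

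First I would fix indices $i,j$ and conjugate the whole picture by $U_i^\dagger$. Since conjugation by a unitary preserves inner products, it preserves dihedral angles and hence strong equiangularity: the pair $(P_i,P_j)$ is strongly equiangular if and only if $(U_i^\dagger P_i U_i,\; U_i^\dagger P_j U_i)$ is strongly equiangular. Now compute these two conjugated projections. Because $P_i = U_i P_0 U_i^\dagger$, we have $U_i^\dagger P_i U_i = P_0$, which is still the distinguished diagonal projection. For the second, $U_i^\dagger P_j U_i = (U_i^\dagger U_j)\, P_0\, (U_i^\dagger U_j)^\dagger$.

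Setting $V = U_i^\dagger U_j$, the question reduces to: when is the pair $(P_0, V P_0 V^\dagger)$ strongly equiangular, with $P_0$ in the diagonal form of Theorem \ref{strongly_equiangular_thm}? This is precisely the setting of Lemma \ref{lemma_U}, which asserts that this happens if and only if $V \in S$, i.e., $V_{11}$ and $V_{22}$ are unitaries up to the same scale $\alpha \in (0,1)$. Applying the lemma in both directions (using both its ``only if'' and its converse ``if'' halves) gives the equivalence $(P_i,P_j)$ strongly equiangular $\iff U_i^\dagger U_j \in S$. Since this holds for every pair $i,j$, taking the conjunction over all pairs yields the corollary.

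There is essentially no obstacle: the argument is a two-line bookkeeping reduction to Lemma \ref{lemma_U}, and the only subtlety worth noting is that one must invoke the lemma's converse direction to get the ``if'' part of the corollary. The case $i = j$ gives $V = I \in S$ (with $\alpha = 1$ degenerating to $P_i = P_j$), which is consistent with Remark \ref{No_intersection_remark} and does not require separate treatment.
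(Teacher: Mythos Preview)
Your proof is correct and essentially identical to the paper's: conjugate by $U_i^\dagger$ to bring $P_i$ to the diagonal $P_0$, rewrite $U_i^\dagger P_j U_i = (U_i^\dagger U_j) P_0 (U_i^\dagger U_j)^\dagger$, and apply both directions of Lemma~\ref{lemma_U}. One small slip: the set $S$ is defined with $0<\alpha<1$ strictly, so $I\notin S$; the corollary (and your proof) should be read with the quantifier over distinct pairs $i\neq j$, rather than claiming $I\in S$.
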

\begin{proof}
As $P_i,P_j$ are strongly equiangular, the pair $U_i^\dagger P_i U_i, U_i^\dagger P_j U_i$ is also strongly equiangular. But $U_i^\dagger P_i U_i=P_0$ is diagonalized, hence \ref{lemma_U} applies, and as $U_i^\dagger P_jU_i = (U_i^\dagger U_j)P_0(U_i^\dagger U_j)^\dagger$, this implies $U_i^\dagger U_j \in S$. The converse holds using the converse in \ref{lemma_U}.
\end{proof}
Let $V^{i,j}=U_i^\dagger U_j$, a notation that will be used throughout the paper. 
\begin{rmk}\label{angle_P_1_P_2}
From \ref{lemma_U} and \ref{cor_S}, the dihedral angle between $P_i,P_j$ is given by simply calculating the $\arccos$ of the root of the scalar $V^{i,j}_{11}(V^{i,j}_{11})^\dagger$.
\end{rmk}
Finally, we would like to understand the unitary gates generated by a sequence of strongly equiangular projections. 
\begin{cor}\label{gates_cor}
With the same settings in \ref{cor_S}, satisfying mutually strong equiangularity, the operator $P_0P_{i_k}\ldots P_{i_1}P_0$ gives
\begin{align}
\begin{pmatrix}
\prod_{r=0}^k V^{i_{r+1},i_{r}}_{11} &  0_{d\times d} \\
0_{d \times d} & 0_{d\times d}
\end{pmatrix}
\end{align}
with $i_{k+1}=i_0=0$.
\end{cor}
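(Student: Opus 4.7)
The plan is a direct computation that reduces the product of projections to a product of top-left blocks of the $V^{i,j}$ unitaries, using nothing more than the block-coordinate form of $P_0$.

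First I would substitute, for every inner factor, $P_{i_r}=U_{i_r}P_0U_{i_r}^\dagger$ from the hypothesis of Corollary~\ref{cor_S}. Adopting the convention $i_0=i_{k+1}=0$ with $U_0=U_{i_{k+1}}=I$ means that the outer copies of $P_0$ in $P_0P_{i_k}\cdots P_{i_1}P_0$ are themselves of the same form, so the whole product can be uniformly written as
\[
\bigl(U_{i_{k+1}}P_0U_{i_{k+1}}^\dagger\bigr)\bigl(U_{i_k}P_0U_{i_k}^\dagger\bigr)\cdots\bigl(U_{i_1}P_0U_{i_1}^\dagger\bigr)\bigl(U_{i_0}P_0U_{i_0}^\dagger\bigr).
\]
Grouping each adjacent pair $U_{i_{r+1}}^\dagger U_{i_r}$ into $V^{i_{r+1},i_r}$, and using $U_{i_0}=U_{i_{k+1}}=I$ to discard the extreme factors, the expression collapses to the alternating product
\[
P_0\,V^{i_{k+1},i_k}\,P_0\,V^{i_k,i_{k-1}}\,P_0\,\cdots\,P_0\,V^{i_1,i_0}\,P_0.
\]

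Next I would use that $P_0$ is literally the coordinate projection onto the first $d$ basis vectors, so for any matrix $M$ one has $P_0 M P_0 = \begin{pmatrix} M_{11} & 0 \\ 0 & 0 \end{pmatrix}$, where $M_{11}$ denotes the top-left $d\times d$ block. Block matrices of the form $\begin{pmatrix} X & 0 \\ 0 & 0 \end{pmatrix}$ multiply by simply multiplying their top-left blocks, so a straightforward induction on $k$ shows that the above alternating product equals $\begin{pmatrix} \prod_{r=0}^k V^{i_{r+1},i_r}_{11} & 0 \\ 0 & 0 \end{pmatrix}$, which is precisely the claimed formula.

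There is no real obstacle here: once the substitution $P_{i_r}=U_{i_r}P_0U_{i_r}^\dagger$ is made and the block-projection identity is invoked, the statement is essentially bookkeeping. The only points requiring care are the ordering convention of the $V^{i_{r+1},i_r}$ factors (which must match the left-to-right order of the projections in the product) and the boundary convention $i_0=i_{k+1}=0$ that allows the outer $P_0$'s to be absorbed without introducing stray $U$ factors.
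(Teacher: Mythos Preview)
Your proposal is correct and follows essentially the same approach as the paper: the paper's proof is the one-line remark that $P_{i_r}=(U_{i_r}P_0)(P_0U_{i_r}^\dagger)$ and $P_0V^{i_r,i_{r-1}}P_0=V^{i_r,i_{r-1}}_{11}$, which is exactly your substitution-and-regrouping argument written more tersely.
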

\begin{proof}
Obvious, as $P_{i_r}=(U_{i_r}P_0) (P_0 U_{i_r}^\dagger)$ and $P_0V^{i_r,i_{r-1}}P_0=V^{i_r,i_{r-1}}_{11}$.
\end{proof}

\begin{rmk} \label{real_version}
Throughout this and the previous chapter, we assumed the Hilbert space to be over the field $\mathbb{C}$. But \textbf{all} results hold when $\mathbb{C}$ is replaced by $\mathbb{R}$, where we deal with real symmetric or orthogonal matrices which have real eigenvalues and real eigenvectors. Also (strongly) equiangular pairs in $\mathbb{R}^n$ are also (strongly) equiangular pairs in $\mathbb{C}^n$, where each coordinate is extended from real to complex which preserves entries of matrices hence preserving (\ref{equiangular_matrix_forms}) and (\ref{strongly_equiangular_matrix_form}).
\end{rmk}

Thus, by restricting ourselves to $P_0$, the unitary gate applied on $P_0$ (up to some scale) is the above product. In the next sections we ask what collections could give a universal quantum computer?

\section{Continuous family of strongly equiangular projections}
In this section, the possibility of division ring extensions of $\mathbb{C}$ is explored to provide a collection of strongly equiangular pairs. First, we recall the definition of octonions which contains all extensions of $\mathbb{C}$.

The non-associative division ring of octonions is generated by $e_i$, i.e. all elements of the octonions are of the form $o=\sum_{i=0}^7 o_ie_i$, with the multiplication Table \ref{tab1}.

\begin{table}
\caption{octonion multiplication table}
\label{tab1}
\begin{tabular}{|c|c|c|c|c|c|c|c|c|}
\hline $e_ie_j$ & $e_0$ & $e_1$ & $e_2$ & $e_3$ & $e_4$ & $e_5$ & $e_6$ & $e_7$ \\ \hline
$e_0$ & $e_0$ & $e_1$ & $e_2$ & $e_3$ & $e_4$ & $e_5$ & $e_6$ & $e_7$\\ \hline
$e_1$ & $e_1$ & $-e_0$ & $e_3$ & $-e_2$ & $e_5$ & $-e_4$ & $-e_7$ & $e_6$ \\\hline
$e_2$ & $e_2$ & $-e_3$ & $-e_0$ & $e_1$ & $e_6$ & $e_7$ & $-e_4$ & $-e_5$ \\\hline
$e_3$ & $e_3$ & $e_2$ & $-e_1$ & $-e_0$ & $e_7$ & $-e_6$ & $e_5$ & $-e_4$ \\\hline
$e_4$ & $e_4$ & $-e_5$ & $-e_6$ & $-e_7$ & $-e_0$ & $e_1$ & $e_2$ & $e_3$ \\\hline
$e_5$ & $e_5$ & $e_4$ & $-e_7$ & $e_6$  & $-e_1$ & $-e_0$ & $-e_3$ & $e_2$\\\hline
$e_6$ & $e_6$ & $e_7$ & $e_4$ & $-e_5$ & $-e_2$ & $e_3$ & $-e_0$ & $-e_1$ \\\hline
$e_7$ & $e_7$ & $-e_6$ & $e_5$ & $e_4$ & $-e_3$ & $-e_2$ & $e_1$ & $-e_0$ \\ \hline
\end{tabular}
\end{table}

The generators $e_0,e_1$ can be identified as the complex numbers $1,i$, while $e_0,e_1,e_2,e_3$ as $1,i,j,k$ to form the quaternions. Most of the times the generator $e_0$ will simply be replaced by $1$. The product rule can be written as:
\begin{align}
e_i.e_j=\begin{cases}
e_i, & \text{if  } j=0\\
e_j, & \text{if } i=0\\
-\delta_{ij}+\varepsilon_{ijk}e_k, & \text{otherwise},
\end{cases}
\end{align}
where $\varepsilon_{ijk}$ is a completely anti-symmetric tensor with $+1$ value only for $ijk=123,145,176,246,257,347,365$ and their cyclic permutations.

Elements of quaternions (and octonions) can also be represented as a sum of $2$ (or $4$) complex numbers by $a=z_0+z_1e_2$ (or $a=\sum_{i=0}^3 z_ie_{2i}$).  Conjugation in each ring is defined as the flipping of signs in all $e_i, i>0$. Therefore, $o^*=o_0e_0 - \sum_{i=1}^7 o_ie_i$. Further in all the three rings $aa^*=a^* a=||a||^2$, where $||a||^2$ is the sum of squared of the real numbers representing $a$. This is very similar to a unitary up to a scale, which is the property that turns out to be important.~
\\
\subsection{Four families of equiangular planes}~
\\
Using these division rings to generate strongly equiangular pairs comes from the geometric picture. There exist exactly four $n$-dimensional equiangular families of $n$-planes in $\mathbb{R}^{2n}$ for $n\in \{1,2,4,8\}$.

The space of lines in $\mathbb{C}^2$ form $\mathbb{C}P^1 \cong S^2$, and they obviously have the strongly equiangular property as they are simply one dimensional lines. The unitary gates on $\mathbb{C}$ that these projections provide consist of all complex numbers. Alternatively, viewed as strongly equiangular planes in $\mathbb{R}^4$, the gates form $SO(2)$. Notice that the gates are not strictly unitary but always unitary up to some scale. Therefore, precisely an isomorphic copy of $\mathbb{R} \times SO(2) \cong \mathbb{C}$ is recovered. But the scaling will be ignored due to normalization, so we are talking about complex numbers of norm one or equivalently $SO(2)$.

Taking this argument one step further, one could consider the \textit{quaternionic lines} forming the space $\mathbb{H}P^1 \cong S^4$. As they are \textit{lines}, they should also form strongly equiangular pairs. Through a suitable embedding, one expects them to form strongly equiangular pairs of $4$-dimensional real planes in $\mathbb{C}^4 \cong \mathbb{R}^8$. As for the gates, since quaternions themselves have a unitary up to some scale representation in $M_{\mathbb{C}}(2,2)$, going from one quaternion to another (from an $\mathbb{R}^4$ subspace to another) is an action by a quaternion (a unitary up to scale). Thus we can expect to recover all gates in $SU(2)$.

The last generalization is to octonions $\mathbb{O}$ and consider octonionic lines forming $\mathbb{O}P^1 \cong S^8$. They are $8$-real dimensional planes in $\mathbb{C}^8 \cong \mathbb{R}^{16}$. One might expect to recover $SU(4)$, but we recover something more: $SO(8)$ which contains $SU(4)$. Notice while $SO(2)$ and $SU(2)$ have real dimensions $1,3$, $SO(8)$ has real dimension $28$. The reason for this jump in dimension is the non-associativity of the octonions which means there is no representation of the octonions as linear operators on any vector space, in particular $\mathbb{C}^4$ (or $\mathbb{R}^8$). What we will do is to choose some ``representation'' of the octonions as linear operators, but this map will not be a homomorphism of algebras. Yet, it turns out to provide a universal quantum gate set. The case of octonions is done separately in section \ref{section_Octonions}.

Are there any more examples of continuously parametrized strongly equiangular projections? The above pattern suggests that such collections of projections could be related to division ring extensions of $\mathbb{R}$, of which there are only three. Recall Remark \ref{No_intersection_remark}, where it was proven that a pair of equiangular projections only intersect at one point, the origin. 

Now assume $\mathcal{P}$ is a submanifold of dimension $n$ in the Grassmannian $Gr(2n,n)$, with the property that $n$-dimensional planes corresponding to the points in $\mathcal{P}$ have no nontrivial intersection. The intersection of each plane $p \in \mathcal{P}$ with the unit sphere $S^{2n-1}$ creates a sphere $S^{n-1}$, and all such $(n-1)$-spheres are disjoint, with linking number one, as the assumption implies. Since $\mathcal{P}$ has dimension $n$, there is a \textit{local} fibration of $S^{2n-1}$ with base a local chart of $\mathcal{P}$, isomorphic to an open $n$-disk $D^n$.
\begin{thm}
Only for $n \in \{1,2,4,8\}$ can there exist an $n$-parameter family of embedded $S^{n-1}$'s in $S^{2n-1}$, each pair with linking number one.
\end{thm}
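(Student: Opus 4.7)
The plan is to reduce the statement to a classical theorem on real division algebras (Bott--Milnor--Kervaire, equivalently Adams' Hopf invariant one theorem). The mechanism is to linearize the family at a single plane and read off from the non-intersection hypothesis a bilinear product on $\mathbb{R}^n$ with no zero divisors.

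First, fix a distinguished point $P_0 \in \mathcal{P}$ (i.e.\ a single $n$-plane in $\mathbb{R}^{2n}$ whose intersection with $S^{2n-1}$ is the chosen $S^{n-1}_0$). Using the standard identification $T_{P_0} Gr(2n,n) \cong \mathrm{Hom}(P_0, P_0^\perp)$, a choice of basis $v_1,\ldots,v_n$ of $T_{P_0}\mathcal{P}$ gives $n$ linear maps $\phi_1,\ldots,\phi_n : P_0 \to P_0^\perp$. Identifying both $P_0$ and $P_0^\perp$ with $\mathbb{R}^n$, we view them as endomorphisms of $\mathbb{R}^n$.

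Next, translate the disjointness hypothesis into linear algebra. For any smooth curve $\gamma(t) \subset \mathcal{P}$ with $\gamma(0) = P_0$ and tangent direction $\sum c_i v_i$, the plane $\gamma(t)$ is, for small $t$, the graph over $P_0$ of the linear map $t\sum_i c_i \phi_i + O(t^2)$. The assumption that any two distinct planes in $\mathcal{P}$ intersect only at $0$ forces the leading-order map $\sum_i c_i \phi_i$ to be invertible whenever $(c_i) \neq 0$; otherwise a nonzero vector in its kernel would be shared between $P_0$ and $\gamma(t)$ to first order, contradicting the uniform trivial-intersection hypothesis along any short arc. Equivalently, for every $x \in \mathbb{R}^n \setminus \{0\}$, the vectors $\phi_1(x),\ldots,\phi_n(x)$ are linearly independent in $\mathbb{R}^n$.

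Finally, define $\mu : \mathbb{R}^n \times \mathbb{R}^n \to \mathbb{R}^n$ by $\mu(c,x) = \sum_i c_i \phi_i(x)$. This is bilinear, and the previous step says $\mu(c,x) = 0$ forces $c = 0$ or $x = 0$: $\mathbb{R}^n$ carries a bilinear product with no zero divisors. By the theorem of Bott--Milnor and Kervaire, equivalent in content to Adams' Hopf invariant one theorem, this happens exactly for $n \in \{1,2,4,8\}$. Note the linking-number-one hypothesis does not enter the algebraic reduction explicitly; it is automatically fulfilled by disjoint great $(n-1)$-spheres coming from such a family of planes, and is the geometric shadow of the Hopf invariant of the (locally defined) Hopf-type fibration that this data assembles into.

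The reduction step itself is straightforward once the tangent-space viewpoint on $Gr(2n,n)$ is in place. The real weight of the argument lies outside the reduction, in the deep topological input supplied by Adams' theorem; so the main obstacle is not one we resolve here but one we cite. An alternative proof route, which I would not pursue because it requires extra globalization work, would instead build an explicit Hopf-invariant-one map $S^{2n-1} \to S^n$ directly from the local fibration, with the linking number computing the Hopf invariant, and then appeal to Adams' theorem in its original form.
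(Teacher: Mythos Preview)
Your approach is genuinely different from the paper's, and in fact the ``alternative route'' you dismiss at the end is exactly what the paper does: it collapses the local $S^{n-1}$-fibration over $D^n$ to a map $S^{2n-1}\to S^n$, reads off Hopf invariant one from the linking number, and invokes Adams. The paper's argument never linearizes and never passes through a division-algebra structure.

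Your linearization has a real gap. From pairwise trivial intersection you want to conclude that $\sum_i c_i\phi_i$ is invertible for every nonzero $c$, but the justification ``a nonzero kernel vector would be shared between $P_0$ and $\gamma(t)$ to first order, contradicting the uniform trivial-intersection hypothesis'' does not hold up. There is no uniformity in the hypothesis---only bare non-intersection---and a first-order intersection is not an intersection. Concretely, write nearby planes as graphs of $A(s)=\sum_i s_i\phi_i+O(|s|^2)$; the hypothesis says $A(s)$ is invertible for $s\neq 0$, but $\det A(tc)=t^n\det\!\big(\phi_c+O(t)\big)$ can perfectly well be nonzero for all small $t\neq 0$ even when $\det\phi_c=0$, because higher-order terms can restore invertibility (e.g.\ in one variable, $A(s)=s^3$). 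So the step from ``pairwise skew'' to ``no zero divisors in the tangent bilinear form'' is not established. A secondary issue is scope: the theorem as stated is about arbitrary embedded $(n{-}1)$-spheres with pairwise linking number one, not only great spheres cut by linear $n$-planes; your setup already assumes $\mathcal{P}\subset Gr(2n,n)$, so at best you are aiming at the corollary rather than the theorem. The paper's Hopf-invariant construction handles the general case directly and uses the linking-number hypothesis in an essential way, which your reduction discards.
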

\begin{proof}
Counting dimensions the image of the germ of the given family constitutes a region $X \subset S^{2n-1}$ which is an $S^{n-1}$ bundle over an open disk $D^n$. We may construct $f: S^{2n-1} \to S^n$ by mapping all of $S^{2n-1}\backslash X$ to the south pole $s$ of $S^n$ and then projecting out the fibers of $X$ to $D^n$ followed by the degree one mapping of $D^n$ to $S^n\backslash \{s\}$. The Hopf invariant of $f$ may be computed as the linking number of generic point preimages; which clearly is $1$. By Adam's Theorem, Hopf invariant one only occurs for the Hopf maps \cite{adams1960non} which only exist for $n\in \{1,2,4,8\}$.
\end{proof}
As the smooth structure of $\mathcal{P}$ was never used in the above arguments, the following holds:
\begin{cor}\label{R^2n_classification_equiangular}
Given a \textit{continuous} $n$-dimensional manifold of equiangular $n$ planes in $\mathbb{R}^{2n}$, we have $n \in \{1,2,4,8\}$.
\end{cor}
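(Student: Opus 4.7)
The plan is to verify that the proof of the preceding Theorem never genuinely invoked the smooth structure of $\mathcal{P}$, so that the same chain of reasoning carries through verbatim for a merely continuous family. I would proceed by walking through each ingredient of the earlier argument and checking that it admits a purely topological version, then isolating the one place where smoothness might appear to matter (the Hopf invariant computation) and explaining why it in fact does not.

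First, equiangularity, via Remark \ref{No_intersection_remark}, forces the $n$-planes in the family to meet pairwise only at the origin; hence their intersections with $S^{2n-1}$ are pairwise disjoint $(n-1)$-spheres. A continuous $n$-parameter family of such disjoint $(n-1)$-spheres, described locally by an embedding of an open disk $D^n$ into the Grassmannian, gives a continuous bundle (or at worst a continuous quasi-fibration) $X \to D^n$ with fiber $S^{n-1}$, where $X \subset S^{2n-1}$ is the union of these spheres over the chart. No differentiability is needed to assemble this structure. The collapse map $f\colon S^{2n-1} \to S^n$, defined by sending $S^{2n-1}\setminus X$ to a south pole $s$ and projecting each fiber of $X$ to its base point in $D^n \cong S^n\setminus\{s\}$, is likewise manifestly continuous.

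The only step that ostensibly uses smoothness is the interpretation of the Hopf invariant as the linking number of two generic point preimages. The resolution is standard: the Hopf invariant is a homotopy invariant of continuous maps, definable intrinsically via the cup product on $H^*$ of the mapping cone, and it may also be computed after a Whitney smooth approximation without changing its value. The linking number of two disjoint topological $(n-1)$-spheres in $S^{2n-1}$ is itself a homotopy invariant, defined by degrees of continuous maps in singular homology. In our setting the preimages of two generic points of $S^n$ under $f$ are exactly two fibers of $X$, i.e.\ two members of the equiangular family meeting $S^{2n-1}$, which have linking number $1$ by the geometry. Hence the Hopf invariant of $f$ is $1$, and Adams' theorem \cite{adams1960non} forces $n\in\{1,2,4,8\}$, completing the argument. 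The main obstacle is thus purely bookkeeping, namely checking that each piece of the smooth construction has a continuous surrogate; no new ideas are required.
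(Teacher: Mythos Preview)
Your proposal is correct and follows exactly the paper's approach: the paper's entire justification for this corollary is the single sentence ``As the smooth structure of $\mathcal{P}$ was never used in the above arguments, the following holds,'' and your write-up simply unpacks that claim step by step, with particular care at the Hopf invariant computation. Your version is more thorough than the paper's one-line remark, but there is no difference in strategy.
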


The above suggests:
\begin{ques}
If there is a family (discrete or continuous) of subspaces with no zero angles, is the family deformable to an equiangular one?
\end{ques}
One idea in this direction is to write down the parabolic (heat) equation which follows the gradient towards equiangularity. We leave this to the future.

We can prove a stronger theorem, where the parameter space is smaller:
\begin{thm}
Only for $n \in \{1,2,4,8\}$ can there exist an $n-1$ or $n-2$ (if $n>3$) parameter family of embedded $S^{n-1}$'s in $S^{2n-1}$, each pair with linking number one.
\end{thm}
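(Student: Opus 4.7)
The plan is to reduce to the previous theorem by thickening the total space of the lower-dimensional family through its normal bundle and then repeating the Hopf construction. Write $k \in \{1,2\}$ for the codimension of the parameter count, so the hypothesis gives a continuous family of pairwise disjoint embedded $(n-1)$-spheres $\{S_p\}_{p \in D^{n-k}}$ in $S^{2n-1}$ with pairwise linking number one. Its total space $X$ is an $S^{n-1}$-bundle over the contractible disk $D^{n-k}$, so $X \cong D^{n-k} \times S^{n-1}$ sits in $S^{2n-1}$ with codimension $k$, and its normal bundle $\nu$ is a rank-$k$ vector bundle over a space homotopy equivalent to $S^{n-1}$.

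The first key step is verifying triviality of $\nu$. For $k=1$, the line bundle over orientable $X$ inside orientable $S^{2n-1}$ is itself orientable, hence trivial. For $k=2$, oriented rank-$2$ bundles over $S^{n-1}$ are classified by $\pi_{n-2}(\SO(2)) = \pi_{n-2}(S^1)$, which vanishes precisely for $n \geq 4$ --- matching the stated hypothesis $n > 3$. Given triviality, a closed tubular neighborhood is $N(X) \cong X \times D^k \cong D^{n-k} \times S^{n-1} \times D^k \cong D^n \times S^{n-1}$. I then define $f \colon S^{2n-1} \to S^n$ exactly as in the previous theorem: send $S^{2n-1} \setminus N(X)$ to the south pole $s$, project $N(X)$ onto its $D^n$ factor, and compose with a degree-one collapse $D^n \to S^n$ that sends $\partial D^n$ to $s$. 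Continuity across $\partial N(X)$ follows since that boundary lands in $\partial D^n$ and thus at $s$.

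To compute the Hopf invariant of $f$, take two generic points $y_1, y_2 \in S^n \setminus \{s\}$; each preimage is a sphere $L_i = \{(q_i, r_i)\} \times S^{n-1} \subset D^{n-k} \times D^k \times S^{n-1} \cong N(X)$, and genericity forces $q_1 \neq q_2$. The linear isotopy $r_i \mapsto t r_i$ as $t$ runs from $1$ to $0$ keeps $L_1, L_2$ in distinct fibers of the projection $N(X) \to D^{n-k}$, so they remain disjoint and their linking number in $S^{2n-1}$ is preserved throughout. At $t = 0$ they become the original family members $S_{q_1}, S_{q_2}$, which link once by hypothesis, so $f$ has Hopf invariant one. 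Adams's theorem then forces $n \in \{1,2,4,8\}$. The main obstacle I anticipate is exactly the triviality step for $\nu$: this is where $n > 3$ is essential in the $k = 2$ case, since for $n = 3$ the rank-$2$ normal bundle over $S^2$ could be twisted by a nonzero element of $\pi_1(\SO(2)) = \mathbb{Z}$; orientability of $X$ and the continuity check at $\partial N(X)$ are comparatively minor bookkeeping.
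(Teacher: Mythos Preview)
Your proof is correct and follows essentially the same route as the paper: trivialize the normal bundle of the total space $X \simeq S^{n-1}$ via the classification $\pi_{n-2}(\SO(k))$, thicken $X$ by a tubular neighborhood to recover an $n$-parameter family, and invoke the Hopf-invariant-one construction from the preceding theorem. Your treatment is in fact somewhat more explicit than the paper's---you spell out the linear isotopy that preserves the linking number and handle the $k=1$ orientability argument directly---but the underlying strategy is identical.
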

\begin{proof}
We try to extend the $k$ family to an $n$-family and apply the previous theorem. Denoting the image of the $k$-parameter family by $X$, we note that it is the oriented product manifold $D^k\times S^{n-1}$. If its (oriented) normal bundle $\nu(X)$ (with fiber dimension $j=1,2$) inside $S^{2n-1}$ can be parallelized, one can simply extend $X$ in the trivial way to a local region like the one in the previous theorem; note that the linking number, using an easy continuity argument, will remain one. The group of oriented bundles over $S^{n-1}$ with fiber $\mathbb{R}^j$ is well-known to be in bijection with $\pi_{n-2}(SO(j))$. Note this classification is up to homotopy type, and $S^{n-1}\times D^k,S^{n-1}$ have the same homotopy type. For $j=1,2$ corresponding to $k=n-1,n-2$, we know $\pi_{n-2}(SO(j))$ is trivial ($n>3$ if $j=2$) as $SO(1)=\{\text{point}\},SO(2)=S^1$.
\end{proof}

\begin{rmk}\label{S^3_hom}
We cannot extend the above theorem to $k=n-3,n-4$, at least not with the above argument as it is known that higher homotopy groups $\pi_{n-2}(S^3)$ for $n>4$ are non-vanishing. More precisely, for $k=n-3$, we know $SO(j=3)\cong S^3$, and for $k=n-4$, we know $SO(j=4)\cong S^3 \times S^3$, which homotopy groups are a product of those of $S^3$. So for the above argument to apply to any of these two cases, we need a vanishing higher homotopy group for $S^3$.
\end{rmk}

\begin{rmk}\label{SO(n-k)_hom}
Using the tables in \cite[p.258-260]{lundell1992concise} for $\pi_{n-2}(SO(n-k))$, it is observed that there is one more family of pairs $(n,k)$ for which $\pi_{n-2}(SO(n-k))$ is vanishing:
$$\pi_{n-2}(SO(n-k))=0 \text{ for } n\equiv 6\pmod 8, n\ge 22, k = 5 $$
This implies the absence of a smooth $k-$dimensional family of strong equiangular $n$-planes for such pairs.
\end{rmk}

\begin{rmk}
By \textit{Remark} \ref{construction_thru_directsum_tensor}, One can use the direct sum on these families (by taking $P_a \oplus P_a$, where $P_a$ is defined in the next section) to produce strongly equiangular families with number of parameters $k=\min\{2^{m},8\}$ for $n$ divisible by $2^m$.

One can also start with the $2n=2,4,8,16$ family and tensor it with $\mathbb{C}$ to get a real equiangular $4,8,16,32$ example. This is of course not \textit{strongly} equiangular as the plane dimensions are a quarter of the Hilbert space dimension. Then forgetting the complex structure and tensoring it again with $\mathbb{C}$, and repeating it, one can get different examples of continuous equiangular collections in $\mathbb{R}^{k}$ for $k=2^{m-2},2^{m-1},2^m, 2^{m+1}$ for all $m>2$. 
\end{rmk}

Therefore, we have the plot in Figure \ref{strongly_equiangular_plot} on the smooth strong equiangular families for $n,k\le 16$. In addition to the above theorems and remarks, we note the obvious fact that any sub-family of the ones found, also form a strongly equiangular family. This means a ``\textbullet'' at a point $(n,k)$ implies one at all points $(n,k')$ for $k'\le k$. Conversely, the known absence of a strong equiangular family denoted by ``$\circ$'' at $(n,k)$, implies the same for $(n,k'), k'\ge k$.
\begin{figure}[h]
\centering
\begin{tikzpicture}[x=0.6cm,y=0.6cm]
\draw[-latex, thin, draw=gray] (0,0)--(17,0) node [right] {$n$}; 
\draw[-latex, thin, draw=gray] (0,0)--(0,17) node [above] {$k$}; 
\draw [step=1.0,dotted, gray] (0,0) grid (16,16);
\foreach \n in {0,...,16} {
        \node at (\n,0) [below] {\tiny $\n$};
        \node at (0,\n) [left] {\tiny $\n$};
}
\foreach \n in {3,5,6,7,9,10,11,12,13,14,15,16} {
        \node at (\n,\n)  {$\circ$};
        \node at (\n,\n -1) {$\circ$};
}
\foreach \n in {5,6,7,9,10,11,12,13,14,15,16} {
        \node at (\n,\n -2) {$\circ$};
}
\foreach \n in {1,...,16} {
        \node at (\n,1)  {\textbullet};
}
\foreach \n in {2,4,6,8,10,12,14,16} {
        \node at (\n,2)  {\textbullet};
}
\foreach \n in {4,8,12,16} {
        \node at (\n,4)  {\textbullet};
}
\foreach \n in {8,16} {
        \node at (\n,8)  {\textbullet};
}
\foreach \n in {1,2,4,8} {
        \node at (\n,1)  {\textbullet};
}
\foreach \n in {4,8,12,16} {
        \node at (\n,3)  {\textbullet};
}
\foreach \k in {5,6,7} {
        \node at (8,\k)  {\textbullet};
        \node at (16,\k)  {\textbullet};
}
\end{tikzpicture}
\caption{Smooth $k$-dimensional family of strongly equiangular $n$-planes in $\mathbb{R}^{2n}$. ``\textbullet'' represents the known strongly equiangular families, and ``$\circ$'' represents the known absence of such a family. The case of non-annotated points is currently unknown except. For example, we do not know if there is a $2$-parameter family in $10$-space, i.e. $(5,2)$ above. Of course, where $k>n$, dimensions add up to more than $2n$, and the families cannot exist by invariance of domain.}\label{strongly_equiangular_plot} 
\end{figure}
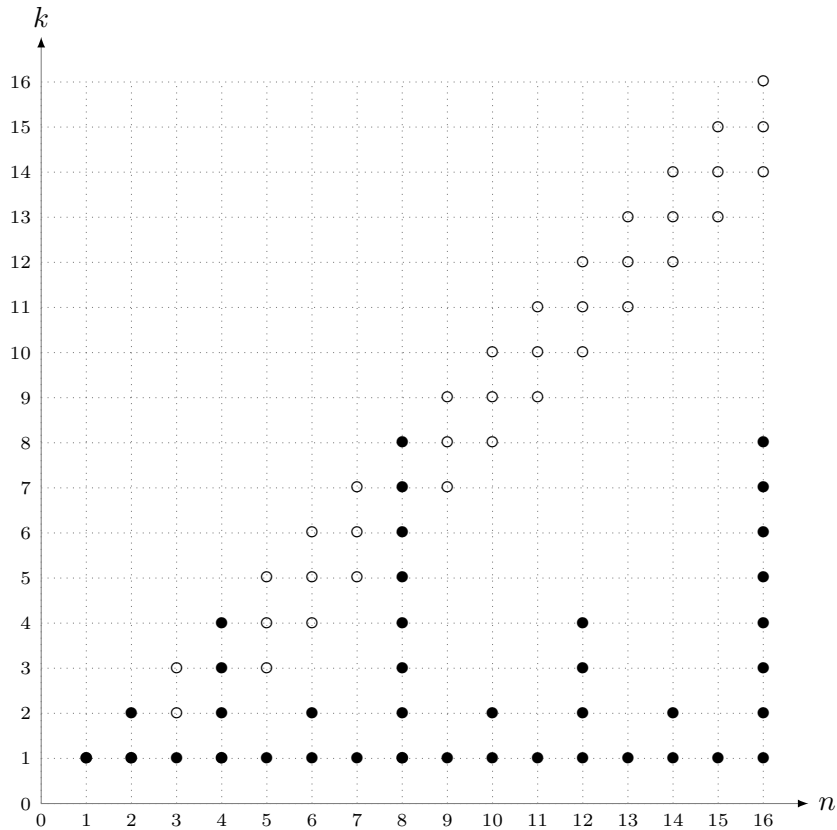

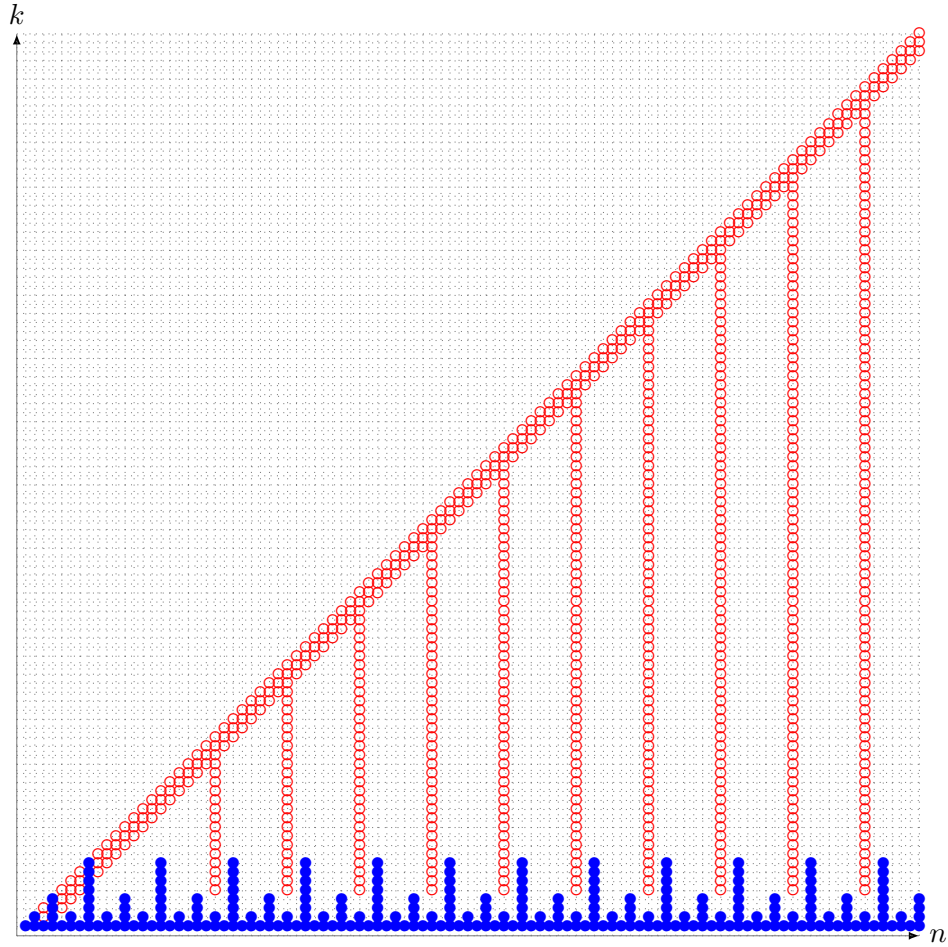
\begin{figure}[h]
\centering
\begin{tikzpicture}[x=0.12cm,y=0.12cm]
\draw[-latex, thin, draw=gray] (0,0)--(100,0) node [right] {$n$}; 
\draw[-latex, thin, draw=gray] (0,0)--(0,100) node [above] {$k$}; 
\draw [step=1.0,dotted, gray] (0,0) grid (100,100);

\foreach \n in {3,5,6,7,9,10,11,12,13,14,15,16} {
        \node at (\n,\n)  {$\red{\circ}$};
        \node at (\n,\n -1) {$\red{\circ}$};
}
\foreach \n in {5,6,7,9,10,11,12,13,14,15,16} {
        \node at (\n,\n -2) {$\red{\circ}$};
}
\foreach \n in {17,...,100} {
        \node at (\n,\n)  {$\red{\circ}$};
        \node at (\n,\n -1) {$\red{\circ}$};
}
\foreach \n in {17,...,100} {
        \node at (\n,\n -2) {$\red{\circ}$};
}

\foreach \n in {1,...,100} {
        \node at (\n,1)  {\blue{\textbullet}};
}
\foreach \n in {1,...,50} {
        \node at (2*\n,2)  {\blue{\textbullet}};
}
\foreach \n in {1,...,25} {
        \node at (4*\n,4)  {\blue{\textbullet}};
        \node at (4*\n,3)  {\blue{\textbullet}};
}
\foreach \n in {1,...,12} {
        \node at (8*\n,8)  {\blue{\textbullet}};
        \node at (8*\n,7)  {\blue{\textbullet}};
        \node at (8*\n,6)  {\blue{\textbullet}};
        \node at (8*\n,5)  {\blue{\textbullet}};
}

\foreach \n in {2,...,11} {
    \pgfmathsetmacro{\z}{8*\n+3}
    \foreach \k in {5,...,\z} {
        \node at (8*\n+6,\k)  {$\red{\circ}$};
        }
}
\end{tikzpicture}
\caption{A color-coded plot demonstrating the pattern of smooth $k-$dimensional family of strongly equiangular $n-$planes in $\mathbb{R}^{2n}$ for $n,k \le 100$. The color red is for the known absence. Notice the particular family in Remark \ref{SO(n-k)_hom} starts at $(n=22,k=5)$ with the general rule $n\equiv 6 \pmod 8, n\ge 22, k\ge 5$.}\label{larger_plot}
\end{figure}
~
\\
\subsection{Strongly equiangular pairs from lines}\label{High_dimensional_lines_and_strongly_equiangular_pairs}~
\\
A quaternionic or octonionic line means vectors of form $\binom{x}{Ax}$, where $A$ is actually a matrix representation of an element $a$ inside the ring $\mathcal{A}$. Our division rings are the extensions of the complex numbers, and the line is the graph of the linear function $y=Ax$. What is meant by a representation is not necessarily an algebra homomorphism, although it will have some naturality, and be a homomorphism for complex numbers and quaternions.

Similar to the previous sections, results in this part also hold when the vector space is over the real numbers. First, one needs to derive the projections $P_a$ on these linear spaces. It is not hard to see that:
\begin{align}\label{P_a}
P_a = 
\begin{pmatrix}
T_A^{-1}  & T_A^{-1}A^\dagger  \\
AT_A^{-1}  & AT_A^{-1} A^\dagger
\end{pmatrix}
\end{align}
where $T_A=(I+A^\dagger A)$. Any representation of the extension rings will have the element $A=0$, and 
\begin{align}
P_0 = 
\begin{pmatrix}
I_{d\times d}  & 0_{d\times d} \\
0_{d\times d} & 0_{d\times d}
\end{pmatrix}
\end{align}
is already diagonalized. For $P_a$ and $P_0$ to be strongly equiangular, it is necessary (and sufficient) that $T_A=\alpha^{-2} I_{d\times d}$ which is equivalent to $A$ being a unitary up to some scale. This is not an unwelcome restriction as long as this \textit{same} property of the elements inside the extension rings is preserved: their inverse is a scalar multiple of their conjugate, i.e. $aa^* = ||a||^2$. 

Therefore, we assume that there is a nice representation with $A \neq 0$ always some unitary up to a scale. This makes $P_a$ of the form
\begin{align}\label{P_a_unitary}
P_a = 
\begin{pmatrix}
\frac{1}{1+||A||^2}I_{d\times d}  & \frac{1}{1+||A||^2}A^\dagger  \\
\frac{1}{1+||A||^2} A  & \frac{||A||^2}{1+||A||^2} I_{d\times d}
\end{pmatrix}
\end{align}
where $AA^\dagger=||A||^2I_{d\times d}$. The next step is to understand when $P_a,P_b$ for $a,b\in \mathcal{A}$ can be strongly equiangular.

The unitary $U_a$ diagonalizing $P_a=U_aP_0U_a^\dagger$ can be computed directly:
\begin{align}\label{U_a}
U_a=
\begin{pmatrix}
        \frac{1}{\sqrt{1+||A||^2}}I_{d\times d} &  \frac{-1}{\sqrt{1+||A||^2}}A^\dagger \\
        \frac{1}{\sqrt{1+||A||^2}}A & \frac{1}{\sqrt{1+||A||^2}}I_{d\times d}
\end{pmatrix}.
\end{align}

\begin{lem}\label{1+ab_lem}
Linear spaces $\binom{x}{Ax}$ and $\binom{x}{Bx}$ for two matrices $A,B \in \mbbC^{d\times d}$, which are unitaries up to some scale, form a strongly equiangular pair if and only if 
\begin{align}
I_{d\times d}+A^\dagger B
\end{align}
is a unitary matrix up to some scale.
\end{lem}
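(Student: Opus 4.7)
The plan is to reduce the strong equiangularity of $P_a,P_b$ to a block condition on $V^{a,b}=U_a^\dagger U_b$ via Corollary \ref{cor_S}, and then compute the $(1,1)$ block explicitly from the formula (\ref{U_a}). Both $P_a$ and $P_b$ are already written in the form $P=U P_0 U^\dagger$ with $U_a,U_b$ explicit, and since $A,B$ are unitaries up to scale one can check that $U_a,U_b$ are genuinely unitary (using that $A^\dagger A=\|A\|^2 I$ follows from $AA^\dagger=\|A\|^2 I$ for square $A$). Thus Corollary \ref{cor_S} applies and reduces the problem to showing that $V^{a,b}\in S$, i.e. that the two diagonal $d\times d$ blocks of $V^{a,b}$ are unitaries up to the same scalar.

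The key step is a single block-matrix multiplication. Computing $U_a^\dagger U_b$ directly from (\ref{U_a}) gives
\begin{align*}
V^{a,b}=\frac{1}{\sqrt{(1+\|A\|^2)(1+\|B\|^2)}}\begin{pmatrix} I+A^\dagger B & A^\dagger-B^\dagger \\ B-A & I+AB^\dagger \end{pmatrix}.
\end{align*}
The top-left block is therefore a (nonzero) scalar multiple of $I+A^\dagger B$, so $V^{a,b}_{11}$ is a unitary up to some scale if and only if $I+A^\dagger B$ is.

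It remains to check the condition on the $(2,2)$ block. The cleanest route is to note that $V^{a,b}$ is itself unitary (being the product of two unitaries), and for any unitary $2d\times 2d$ matrix the identity $UU^\dagger=I$ forces $U_{11}U_{11}^\dagger + U_{12}U_{12}^\dagger = I$ and $U_{21}U_{21}^\dagger + U_{22}U_{22}^\dagger = I$; combined with $U^\dagger U=I$ this yields $U_{11}U_{11}^\dagger=\alpha^2 I \Leftrightarrow U_{22}U_{22}^\dagger=\alpha^2 I$ (this is essentially how the set $S$ is described in Lemma \ref{lemma_U}). Hence the single scalar condition on $I+A^\dagger B$ forces $V^{a,b}\in S$ and therefore, by Corollary \ref{cor_S}, the pair $(P_a,P_b)$ is strongly equiangular; conversely if $(P_a,P_b)$ is strongly equiangular then $V^{a,b}\in S$ and the $(1,1)$ block computation gives the stated property of $I+A^\dagger B$.

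I do not anticipate any real obstacle: the proof is essentially a two-line block computation wrapped inside the already-established equivalence of Corollary \ref{cor_S}. The only mildly delicate point is the implicit claim that the scalar $\alpha$ for the top-left and bottom-right blocks agree; this is handled by invoking unitarity of $V^{a,b}$ rather than by a second explicit computation on $I+AB^\dagger$.
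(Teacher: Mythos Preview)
Your proposal is correct and follows essentially the same route as the paper: apply Corollary~\ref{cor_S} and compute the diagonal blocks of $V^{a,b}=U_a^\dagger U_b$ from (\ref{U_a}). The only minor variation is that you deduce the $(2,2)$-block condition from unitarity of $V^{a,b}$ rather than writing it out explicitly as the paper does; this is a cosmetic difference, not a different argument.
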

\begin{proof}
This is straightforward application of \ref{cor_S}, where for $V^{a,b}=U_a^\dagger U_b$, the diagonal blocks are
\begin{align}\label{diagonal_blocks}
V^{a,b}_{11}=\frac{I_{d \times d} + A^\dagger B}{(1+||A||^2)^{\frac{1}{2}}(1+||B||^2)^{\frac{1}{2}}} \\ 
V^{a,b}_{22}=\frac{I_{d\times d} +B^\dagger A}{(1+||A||^2)^{\frac{1}{2}}(1+||B||^2)^{\frac{1}{2}}}.
\end{align}
\end{proof}
\begin{rmk}
Although we will only construct special cases of a collection of strongly equiangular projections, one can ask whether there is a classification for such collections. For example, $I_{d\times d}+A^\dagger B$ is a unitary up to some scale if and only if $A^\dagger B$ is a unitary matrix with two eigenvalues complex conjugate of another, unless $A^\dagger B \propto I_{d\times d}$, which is a significant restriction on the choices of these matrices.
\end{rmk}
~
\\
\subsection{Gates from complex numbers and quaternions}~
\\
Hilbert spaces in this section are over the complex field. Assume $d=1$. Then $A \in \mathbb{C}$, and \ref{1+ab_lem} obviously holds. This is equivalent to considering the obvious representation for $\mathcal{A}=\mathbb{C}$ as linear maps on $\mathbb{C}$.

For $d=2$, notice that the set of unitaries up to some scale in $M_\mathbb{C}(2,2)$ is isomorphic to the \textbf{algebra} of quaternions. This means that the set of unitaries up to a scale in $d=2$ is closed under multiplication and addition, making $I_{d\times d}+A^\dagger B$ also a unitary up to some scale. Therefore, the maximal collection of projections $P_a$ in $d=2$ is a collection of strongly equiangular pairs. Again, like the previous case, we see the usual representation of quaternions $a=(a_0+ja_2)+i(a_1+ja_3)$ as 
\begin{align}\label{unitary_2d}
A=
\begin{pmatrix}
z_0 & z_1 \\
-\overline{z_1} & \overline{z_0}
\end{pmatrix}, \ \ z_0=a_0+ia_2, \ \ z_1=a_1+ia_3,
\end{align}
easily works. But do we get universal gates? 
\begin{thm}
Unitary gates from sequence of projections given by representations of $\mathbb{C},\mathbb{H}$ generate $SO(2),SU(2)$ respectively.
\end{thm}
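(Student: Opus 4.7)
The plan is to use Corollary \ref{gates_cor} together with equation (\ref{diagonal_blocks}) to reduce the problem to an elementary range question for $I+B^\dagger A$ in the representing algebra $\mathcal{A}$, and then exhibit every target gate using a single intermediate projection.

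By Corollary \ref{gates_cor}, any gate on the image of $P_0$ produced by a sequence $P_0 P_{i_k}\cdots P_{i_1} P_0$ is, up to a positive real scale, the ordered matrix product $V^{i_{k+1},i_k}_{11}\cdots V^{i_1,i_0}_{11}$ with $i_0=i_{k+1}=0$. From (\ref{diagonal_blocks}) we have $V^{a,b}_{11}\propto I+A^\dagger B$, so this factor collapses to a scalar multiple of $I$ whenever $a=0$ or $b=0$. In the minimal non-trivial case $P_0 P_b P_a P_0$ the two outer factors therefore reduce to scalars and the whole gate simplifies, up to positive scale, to $V^{b,a}_{11}\propto I+B^\dagger A$. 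It is enough to show that, after normalization, $\{I+B^\dagger A:a,b\in\mathcal{A}\}$ realizes every element of $SO(2)$ when $\mathcal{A}=\mathbb{C}$ and every element of $SU(2)$ when $\mathcal{A}=\mathbb{H}$, and, conversely, that each such normalized matrix lies in the claimed group.

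For $\mathcal{A}=\mathbb{C}$ (so $d=1$), the matrix $I+B^\dagger A$ is just the complex number $1+\bar b a$; fixing $a=1$ and letting $b$ vary over $\mathbb{C}$ produces every value in $\mathbb{C}$, and dividing by modulus realizes every element of $U(1)\cong SO(2)$. For $\mathcal{A}=\mathbb{H}$ (so $d=2$), the representation (\ref{unitary_2d}) identifies quaternions with unitaries-up-to-scale in $M_2(\mathbb{C})$, sends $a\mapsto A$ with $A^\dagger$ representing $\bar a$ and $\det A=\|a\|^2$, and restricts to an isomorphism from unit quaternions onto $SU(2)$. Under this dictionary $I+B^\dagger A$ is the matrix of a single quaternion of the form $1+\bar b a$ (or $1+a\bar b$, depending on the convention) that ranges over all of $\mathbb{H}$ as $a,b$ vary — for instance, fixing $a=1$ and letting $b$ run over $\mathbb{H}$. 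Dividing by its norm then realizes every unit quaternion, hence every element of $SU(2)$. The reverse inclusions are immediate from the determinant/modulus calculation.

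The argument is essentially algebraic and I foresee no real obstacle. The one step deserving care is verifying that matrix multiplication of the normalized $V$-factors corresponds to multiplication in the representing algebra; this is trivial for $\mathbb{C}$ and follows for $\mathbb{H}$ from (\ref{unitary_2d}) being an algebra (anti-)homomorphism onto the $2\times 2$ unitaries-up-to-scale. A pleasantly strong consequence is that one intermediate projection already attains every element of the target group — no long sequences are needed at this level.
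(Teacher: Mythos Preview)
Your proposal is correct and follows essentially the same route as the paper. Both arguments invoke Corollary~\ref{gates_cor} and (\ref{diagonal_blocks}) to reduce to products of factors $I+A^\dagger B$, observe that these products remain in the algebra (giving the upper bound), and then exhibit every target element via a length-two sequence with one index fixed to $1$: the paper uses $P_0P_1P_bP_0$ to obtain $I+B$, while you use $P_0P_bP_1P_0$ to obtain $I+B^\dagger$, which is the same surjectivity statement.
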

\begin{proof}
Using the notations in \ref{gates_cor}, 
$$V^{ij}_{11}=\frac{1+A_i^\dagger A_j}{(1+||A_i||^2)^\frac{1}{2}(1+||A_j||^2)^\frac{1}{2}}.$$ 
Recall the operator $P_0P_{i_k}\ldots P_{i_1}P_0$ gives
\begin{align}
\begin{pmatrix}
\prod_{r=0}^k V^{i_{r+1},i_{r}}_{11} &  0_{d\times d} \\
0_{d \times d} & 0_{d\times d}
\end{pmatrix}
\end{align}
with $i_{k+1}=i_0=0$. The operator on the states inside $P_0$ is
\begin{align}\label{gates_formula}
\frac{\prod_{r=0}^{k}(1+A_{i_{r+1}}^\dagger A_{i_r})}{\prod_{r=0}^{k}(1+||A_{i_{r}}||^2)}.
\end{align}

When $A_i \in \mathbb{C}$, then $\prod_{r=0}^k V^{i_{r+1},i_{r}}_{11}$ falls into $\mathbb{C}$ as well. Further, taking $P_0P_1P_bP_0$, gives $\frac{1+B}{(1+||B||^2)(1+||1||^2)}$ where actually $B=b$. Hence, by normalization and varying $b$, complex numbers of norm one or equivalently $SO(2)$ is recovered.

Similarly for the case of quaternions, as they form an algebra, the product $\prod_{r=0}^k V^{i_{r+1},i_{r}}_{11}$ is in the algebra, which means after normalization they can not give anything more than $SU(2)$. Choosing $P_0P_1P_bP_0$ gives the unitary up to scale $\frac{1+B}{(1+||B||^2)(1+||1||^2)}$. By normalization and varying $B$, $SU(2)$ is recovered.
\end{proof}
We can go over what was done over real vector spaces. The real representation would actually be more in line with what we desire to represent. Notice we would like to see an embedding of $\binom{x}{ax}$, where $ax$ is supposed to be a multiplication inside the ring. After all, the inspiration was projective lines, which are exactly of that form. We need a vector representation of the ring at the same time as an operator representation and would like the vector $Ax$ to be representative of the element $ax$. This uniquely defines $A$ as the vector representation is an isomorphism of vector spaces, hence defining $A$ over a basis. Although, from a practical perspective, by switching to the real representation, the gates that will be recovered will not change as a result.

For $d=2$, by choosing the obvious vector representation $x=x_1+ix_2 \in \mathbb{C} \to \binom{x_1}{x_2} \in \mathbb{R}^2$, the matrix representation for $a=a_0+ia_1$ is as follows:
\begin{align}\label{C_real_rep}
\begin{pmatrix}
a_0 & -a_1 \\
a_1 & a_0
\end{pmatrix}.
\end{align}
For $d=4$ and the quaternions, the vector representation for $x=\sum_{i=0}^7 x_ie_i$ is similarly $(x_0,x_1,x_2,x_3)$. Then, the matrix representation for $a=\sum_{i=0}^3 a_i e_i$ is:
\begin{align}
A=
\begin{pmatrix}
a_0 & -a_1 & -a_2 & -a_3  \\
a_1 & a_0 & -a_3 & a_2  \\
a_2 & a_3 & a_0 & -a_1 \\
a_3 & -a_2 & a_1 & a_0
\end{pmatrix}.
\end{align}
Note this is an extension of the real representation of complex numbers in (\ref{C_real_rep}). This representation not only works on the matrix-vector level to represent the product, but even as composition of matrices, it is actually a homomorphism of the quaternions to $4 \times 4$ real matrices. This was similarly true for the case of complex numbers. This will not be true for octonions as they are non-associative.

The gates recovered will not change, as in fact, the above matrix can be seen to be of the form
\begin{align}\label{why_SU(2)}
A=
\begin{pmatrix}
A_1 & -A_2\\
A_2 & A_1
\end{pmatrix}
\end{align}
where $A_1+iA_2$ gives the matrix in (\ref{unitary_2d}). That is why quaternions do not give $SO(4)$ which might be expected as $\mathbb{C},\mathbb{O}$ give $SO(2),SO(8)$. 

In the case for octonions, the last representation will be extended. 

\section{Universal computing model from octonions}\label{section_Octonions}
\subsection{$SO(8)$ from octonions}~
\\
We shall work with $\mathbb{R}^8$ and $\mathbb{R}^{16}$. The ``representation'' for octonion multiplication by $a=\sum_{i=0}^7 a_ie_i$ on $x=(x_0,\ldots,x_7)$ is
\begin{align}\label{octonion_matrix}
A=\begin{pmatrix}
    a_0 & -a_1 & -a_2 & -a_3 & -a_4 & -a_5 & -a_6 & -a_7\\
    a_1 & a_0 & -a_3 & a_2 & -a_5 & a_4 & a_7 & -a_6 \\
    a_2 & a_3 & a_0 & -a_1 & -a_6 & -a_7 & a_4 & a_5 \\
    a_3 & -a_2 & a_1 & a_0 & -a_7 & a_6 & -a_5 & a_4 \\
    a_4 & a_5 & a_6 & a_7 & a_0 & -a_1 & -a_2 & -a_3 \\
    a_5 & -a_4 & a_7 & -a_6  & a_1 & a_0 & a_3 & -a_2\\
    a_6 & -a_7 & -a_4 & a_5 & a_2 & -a_3 & a_0 & a_1 \\
    a_7 & a_6 & -a_5 & -a_4 & a_3 & a_2 & -a_1 & a_0
\end{pmatrix}.
\end{align}
The matrix is the same $8\times 8$ matrix in Table \ref{tab1} with the following modifications: all columns except the first one are multiplied by $-1$. This matrix is no longer of the form (\ref{why_SU(2)}), which is why more than $SU(4)$ is recovered. Also, as mentioned before, the matrix representation is not a homomorphism to matrix algebras as the octonions are non-associative.

In order to apply results in \ref{High_dimensional_lines_and_strongly_equiangular_pairs}, the first requirement is to show that $AA^\dagger \propto I$. 

Observe that $A=a_0I+H_a$ where $H_a$ is a skew-symmetric matrix and from the definition of $A$, $A^*$ which corresponds to $a^*$ is indeed equal to $A^\dagger$ as $H_{a^*}=-H_a=H_a^\dagger$. Therefore $AA^\dagger=$
\begin{align}
(a_0I+H_a)(a_0I+H_{a^*})=a_0^2 I + a_0(H_a+H_{a^*}) +H_aH_{a^*}.
\end{align}
The second term is zero as $-H_a=H_{a^*}$. The third term by direct calculation turns out to be $(\sum_{i=1}^7 a_i^2)I$. Hence, $AA^\dagger=||a||^2I$. This also implies that the complement to $P_a$, the projection on $\binom{x}{Ax}$, is $P_{-a^*/||a||^2}$, which was also the case for the complex and quaternionic lines.

The remaining step is to check whether $I+A^\dagger B$ is a unitary up to some scale. By simple calculations, this means
\begin{align}\label{ab+ba}
(I+A^\dagger B)(I+B^\dagger A) \propto I \leftrightarrow
A^\dagger B + B^\dagger A \propto I \leftrightarrow 
\end{align}
$$
(a_0I-H_a)(b_0I+H_b)+(b_0I-H_b)(a_0I+H_a) \propto I $$
The last term is equal to $2(\sum_{i=0}^7 a_ib_i)I$. This is also equal to twice the inner product of $a,b$ as $8$-dimensional vectors. 

Summarizing, the octonions give a collection of strongly equiangular pairs. We want to prove the gates they generate is $SO(8)$. The idea will be to show that the $8$ matrices corresponding to $e_i$ called $E_i$ will be enough to generate the Lie algebra $\mathfrak{so}(8)$ ($28$ dimension) by $E_i^\dagger E_j, i \neq j$. Then the formula $I+A_i^\dagger A_{i+1}$ appearing in gates formula (\ref{gates_formula}) will be related to the exponential of an element inside the Lie algebra. As the Lie algebra is generated by those elements, the Lie group will be generated by their exponentials. 

We now fill in the details of the above idea:
\begin{thm}\label{E_i_generate_so(8)}
Let $E_i$ be the matrix corresponding to the octonion $e_i$. Then $E_i^\dagger E_j$ for $i\neq j$, gives a basis for the Lie algebra $\mathfrak{so}(8)$.
\end{thm}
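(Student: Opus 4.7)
The plan is to first verify that each $E_i^\dagger E_j$ with $i\ne j$ lies in $\mathfrak{so}(8)$, check that the count of such matrices with $i<j$ matches $\dim\mathfrak{so}(8)=28$, and then establish linear independence by exploiting the Clifford-type relations the matrices $E_1,\ldots,E_7$ inherit from the octonionic basis.

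First I would apply equation (\ref{ab+ba}) with $a=e_i$ and $b=e_j$: since $e_i\cdot e_j=0$ for $i\ne j$, it yields $E_i^\dagger E_j+E_j^\dagger E_i=0$, i.e.\ $(E_i^\dagger E_j)^\dagger=-E_i^\dagger E_j$, so each $E_i^\dagger E_j$ is skew-symmetric, hence in $\mathfrak{so}(8)$. Counting, there are $\binom{8}{2}=28=\dim\mathfrak{so}(8)$ pairs $(i,j)$ with $0\le i<j\le 7$, so it suffices to prove linear independence.

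Next I would repackage the 28 matrices. When $i=0$, $E_0^\dagger E_j=E_j$; when $i\ge 1$, the formula (\ref{octonion_matrix}) makes $E_i=(e_i)_0 I+H_{e_i}$ with $H_{e_i}$ skew-symmetric and $(e_i)_0=0$, so $E_i^\dagger=-E_i$ and hence $E_i^\dagger E_j=-E_iE_j$ for $1\le i<j\le 7$. Thus the span of the 28 matrices equals the span of $\{E_j\}_{1\le j\le 7}\cup\{E_iE_j\}_{1\le i<j\le 7}$. Now (\ref{ab+ba}) with $a=b=e_i$ ($i\ge 1$) together with $E_i^\dagger=-E_i$ gives $E_i^2=-I$, while for distinct $a=e_i,b=e_j\in\{e_1,\ldots,e_7\}$ it gives $E_iE_j+E_jE_i=0$. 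These are exactly the defining relations of a representation of the real Clifford algebra $\mathrm{Cl}(0,7)\cong M_8(\mathbb{R})\oplus M_8(\mathbb{R})$; the corresponding homomorphism surjects onto a single $M_8(\mathbb{R})$ summand, and a basis of the image is given by reduced monomials $E_{i_1}\cdots E_{i_k}$ with $i_1<\cdots<i_k$ and $k\in\{0,1,2,3\}$, totaling $1+7+21+35=64$. In particular, $\{I\}\cup\{E_j\}_{1\le j\le 7}\cup\{E_iE_j\}_{1\le i<j\le 7}$ is a linearly independent set of $29$ matrices; discarding the symmetric $I$ leaves $28$ linearly independent skew-symmetric matrices, which must form a basis of $\mathfrak{so}(8)$.

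The main obstacle is the appeal to the classification $\mathrm{Cl}(0,7)\cong M_8(\mathbb{R})^{\oplus 2}$. A self-contained substitute would be a direct calculation showing that the Frobenius pairings $\operatorname{tr}\bigl((E_i^\dagger E_j)^T E_k^\dagger E_l\bigr)$ vanish unless $(i,j)=(k,l)$; using (\ref{ab+ba}), $E_i^\dagger=-E_i$ for $i\ge 1$, and cyclicity of trace, one reduces the four-fold trace to a product of quadratic pairings at the cost of a short case analysis on which of the indices vanish.
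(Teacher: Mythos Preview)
Your argument is correct and takes a genuinely different route from the paper.  The paper proceeds combinatorially: it shows that for each $k$ with $e_ie_j\propto e_k$ the matrix $E_iE_j$ has the same nonzero pattern as $E_k$, identifies (with computer assistance) a ``distinguished'' sign in each of the four such matrices, and checks a $4\times4$ determinant to obtain linear independence.  Your approach is structural: you observe that $E_1,\ldots,E_7$ satisfy the defining relations of $\mathrm{Cl}(0,7)$, invoke $\mathrm{Cl}(0,7)\cong M_8(\mathbb{R})\oplus M_8(\mathbb{R})$, and deduce that the $64$ reduced monomials of degree $\le 3$ are a basis of the image (the one point worth making explicit is that $\omega=E_1\cdots E_7$ maps to $\pm I$, so degree $\ge 4$ monomials reduce to degree $\le 3$; surjectivity then forces these $64$ to be independent).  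The payoff of your route is that it is cleaner, avoids case-by-case sign checks, and immediately explains \emph{why} octonionic left multiplication generates $\mathfrak{so}(8)$ via the spin representation; the paper's payoff is that it is entirely self-contained and exhibits the explicit sign structure of the $E_iE_j$, which is useful for the gate constructions that follow.  Your proposed trace-orthogonality alternative is also sound and is perhaps the quickest self-contained substitute: since nontrivial reduced monomials $E_S$ with $1\le|S|\le 6$ are traceless (conjugate by some $E_j$ with $j\notin S$ or $j\in S$ to flip the sign), the Frobenius pairing on $\{E_j\}\cup\{E_iE_j\}$ is diagonal.
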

\begin{proof}
As $E_i^\dagger= \pm E_i$, we need to prove the theorem for $E_iE_j$. First we analyze what $E_i$ is. Entries are indexed by rows and columns numbered $0$ to $7$ top to bottom and left to right.

Each entry in $(s,l)$ of $E_i$ is given by $c_{(s,l)}^i \in \{0,\pm1\}$ where $c_{(s,l)}^ie_ie_l=e_s$. In other words $-c_{(s,l)}^ie_i=e_se_l$ for $l\neq 0$. So the entry $(s,l)$ for $l \neq 0$, is $c_{(s,l)}^i=-\varepsilon_{sli}=\varepsilon_{lsi}$. For $l=0$, $s=i$ and $c_{(s,l)}^i=1$. Each row and column have only one nonzero entry.

Assume $i,j\neq 0$. Notice $E_iE_j$ is a skew-symmetric matrix as $E_i= H_{e_i} \implies (E_iE_j)^\dagger=E_jE_i$ and using (\ref{ab+ba}), one obtains $H_aH_b+H_bH_a=-2(\sum_{i=1}^7a_ib_i)I$
which is zero when $a=e_i,b=e_j, i \neq j$. Thus, only the upper diagonal entries of $E_iE_j$ needs to be computed.

The entry in $(s,m)$ of $E_i E_j$ is given by the dot product of the row $s$ and column $m$ in $E_i$ and $E_j$, respectively. As the row $s$ and column $m$ have only one nonzero entry, there must exist an $e_l$ such that $c_{(s,l)}^ie_ie_l=e_s$ and $c_{(l,m)}^je_je_m=e_l$. There are three (actually two) special cases:
\begin{itemize}
    \item If $l=0$, then $s=i,m=j$ implying $c_{(s,l)}^i=1,c_{(l,m)}^j=-1$ and the result is $-1$ for the entry $(i,j)$.
    \item If $m=0$, then $l=j$, and $s=k$ where $e_ie_j=\varepsilon_{ijk}e_k$ and the result in $(k,0)$ is $\varepsilon_{ijk}$. 
    \item If $s=0$, then $l=i$ and $m=k$ as above and the entry in $(0,k)$ is $-\varepsilon_{ijk}$ due to skew-symmetry. 
\end{itemize}

Otherwise, we have $c_{(s,l)}^ic_{(l,m)}^j=\varepsilon_{sli}\varepsilon_{lmj}$. But rearranging the equations,  $(c_{(s,l)}^ie_ie_l) (c_{(l,m)}^je_le_j)=e_se_m$. Notice due to non-associativity the parentheses need to be preserved. But through direct calculations, it is shown that a property of the octonions is $(e_ie_l) (e_le_j) \propto e_ie_j, \forall i,j,l$. Therefore $e_se_m \propto e_ie_j \propto e_k$. 

This implies that the matrix $E_iE_j$ has nonzero entries exactly where $E_k$ does. It is only the signs of the $\pm 1$ entries that are changed. Through direct computations using the anti-symmetric tensor $\varepsilon$, it can be checked that always one of the four $\pm 1$ entries in the upper diagonal part of the matrix has a sign different from the three others. Let us call that entry the \textit{distinguished} entry. 

For any $i$ and a fixed $k$, there is a unique $j$ such that $e_ie_j \propto e_k$. Thus there are four pairs for each fixed $k$ of such $(i,j)$ and each pair gives a matrix $E_iE_j$. With the help of computer, each of these matrices can be seen to have a unique and different distinguished entry from the others.

For proving all $E_iE_j$s are linearly independent, the linear independency needs to be checked for each group of four matrices which have matching locations of the nonzero entries, and these matrices as explained above are linearly independent as:
\begin{align}
\text{det}
\begin{pmatrix} 
-1 & 1 & 1 & 1 \\
1 & -1 & 1 & 1 \\
1 & 1 & -1 & 1 \\
1 & 1 & 1 & -1 
\end{pmatrix}
\neq 0.
\end{align}
The $28=\binom{8}{2}$ total choices of $E_iE_j$ are linearly independent skew-symmetric matrices, thus spanning $\mathfrak{so}(8)$.
\end{proof}

The desired theorem for obtaining universal gates is:
\begin{thm}\label{octonion_universality}
The operators $P_{0}P_{a_k}\ldots P_{a_1}P_{0}$, where $a_j=t_je_{k_j}$ for some $t_j \in \mathbb{R}$ and index $k_j \in \{0,\ldots,7\}$, once normalized, generate the Lie group $SO(8)$.
\end{thm}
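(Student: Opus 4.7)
The plan is to reduce the claim to the shortest nontrivial product $P_0 P_{a_2} P_{a_1} P_0$, recognize its normalization as a one-parameter subgroup of $SO(8)$, and then invoke a standard Lie-theoretic generation argument. First I would specialize the gates formula (\ref{gates_formula}) to the inputs $a_j = t_j e_{k_j}$. Writing $A_j = t_j E_{k_j}$ and using $A_j A_j^\dagger = t_j^2 I$ (established for octonions just before Theorem \ref{E_i_generate_so(8)}), a direct calculation collapses $P_0 P_{a_2} P_{a_1} P_0$ on the image of $P_0$ to
\[
\frac{I + t_1 t_2\, E_{k_2}^\dagger E_{k_1}}{(1+t_1^2)(1+t_2^2)}.
\]
Assuming $k_1 \neq k_2$, set $X := E_{k_2}^\dagger E_{k_1}$; the case $k_1 = k_2$ only contributes a scalar multiple of $I$ and can be ignored after normalization.

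The key observation is that for any $i \neq j$, the matrix $X = E_i^\dagger E_j$ satisfies $X^2 = -I$. Skew-symmetry follows from (\ref{ab+ba}) when both indices are positive and is trivial when one of them is $0$ (since $E_0 = I$ and each $E_k$ with $k>0$ is skew). Moreover every $E_k$ with $k>0$ is a signed permutation matrix, hence orthogonal, and $E_0 = I$, so $X$ is a product of orthogonal matrices and therefore orthogonal. Combining $X^\dagger = -X$ with $X^\dagger X = I$ yields $X^2 = -I$. From this, $(I + cX)(I + cX)^\dagger = (1+c^2)I$, so dividing out the scalar produces the normalized operator
\[
\frac{I + cX}{\sqrt{1+c^2}} = \cos\theta\, I + \sin\theta\, X = \exp(\theta X), \qquad \tan\theta = c = t_1 t_2.
\]
Letting $t_1 t_2$ sweep all of $\mathbb{R}$ realizes $\exp(\theta X)$ for $\theta \in (-\pi/2, \pi/2)$, and composing several such products covers all of the one-parameter subgroup $\exp(\mathbb{R} X) \subset SO(8)$.

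To conclude, I would vary $(k_1, k_2)$ over the $28$ unordered pairs with $k_1 \neq k_2$; by Theorem \ref{E_i_generate_so(8)} the resulting generators $\{E_{k_2}^\dagger E_{k_1}\}$ form a basis of $\mathfrak{so}(8)$. A standard Lie-theoretic fact then finishes the proof: in a connected Lie group $G$, the subgroup generated by $\bigcup_m \exp(\mathbb{R} X_m)$ is all of $G$ whenever $\{X_m\}$ span $\mathrm{Lie}(G)$, because Trotter-style products of small $\exp(\theta_m X_m)$ cover a neighborhood of the identity and connectedness of $SO(8)$ does the rest. The main technical bookkeeping I expect is checking $X^2 = -I$ uniformly across all edge cases (including $k_1 = 0$ or $k_2 = 0$), and confirming that the composites sit in $SO(8)$ rather than merely $O(8)$; the latter is automatic since each factor $\exp(\theta X)$ is continuously connected to $I$ through $\theta$, keeping the determinant at $+1$.
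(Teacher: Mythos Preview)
Your proof is correct and follows essentially the same route as the paper: reduce to $\exp(\theta\, E_i^\dagger E_j)$ via the identity $X^2=-I$ (skew plus orthogonal), then invoke Theorem~\ref{E_i_generate_so(8)} and the standard fact that one-parameter subgroups along a spanning set of the Lie algebra generate a connected compact Lie group. The paper additionally flags a ``coupling'' subtlety---in a long product each choice of $a_i$ affects two consecutive factors $(1+A_{i+1}^\dagger A_i)(1+A_i^\dagger A_{i-1})$---and resolves it by inserting $P_0$ between pairs; your decision to work directly with the two-step blocks $P_0 P_{a_2} P_{a_1} P_0$ and then compose them achieves exactly the same decoupling, since $P_0 P_0 = P_0$ and $0 = 0\cdot e_0$ is an admissible intermediate label.
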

\begin{proof}
It is a standard theorem that for a basis $\{h_i\}$ of the Lie algebra of a finite dimensional compact Lie group, all elements of the Lie group are of the form $\prod_i \text{exp}(t_ih_i)$ for a suitable choice of $t_i \in \mathbb{R}$. It was also shown that $P_{0}P_{a_k}\ldots P_{a_1}P_{0}$ gives the unitary gate $C_{a_k,\ldots,a_1} \prod_i (1+A_{i+1}^\dagger A_i)$ where $C_{a_k,\ldots,a_1} \in \mathbb{R}$ is some scalar to make the product unitary.

The previous theorem gives the basis $\{h_i\}=\{E_iE_j\}$, but we also need $\exp(t_ih_i)$ to be of the form $(1+d_ih_i)$ for some real number $d_i$. This is indeed the case as the basis is not only a collection of skew-symmetric matrices but also matrices which are unitary up to a scale. Hence using the standard Taylor expansion of $\exp()$ of a matrix, we have the desired form for $\exp(t_ih_i)$, up to some scale which will factor into the scalar $C_{a_k,\ldots,a_1}$.

The remaining subtlety is that the choice of $a_i$ effects three gates. This means, e.g. choosing $a_3=e_1$ implies that the next gate defined by $h_3=E_{3}^\dagger E_{4}$ has only $8$ possibilities. In order to solve this issue, consider sequences of the form 
\begin{align}\label{Gates}
P_0P_{a_{2k}}P_{a_{2k-1}}P_0 P_0 P_{a_{2k-2}}P_{a_{2k-3}} P_0 P_0 \ldots P_0 P_{a_2}P_{a_1} P_0.
\end{align}
Then the gate is $C_{a_{2k},\ldots,a_1}\prod_i (1+A_{2i}^\dagger A_{2i-1})$ allowing to exactly apply any combination of $\prod_l \text{exp}(t_lE_{k_l}E_{j_l})$ by choosing $a_{2l}=e_{j_l},a_{2l-1}=c_le_{k_l}$ where $c_l$ is determined such that $(1\pm c_lE_{j_l}E_{k_l}) \propto \text{exp}(t_lE_{k_l}E_{j_l})$ ($\pm$ is dependent on whether $j_l=0$ or not). In fact, as $(E_{k_l}E_{j_l})^2=-I$,  $c_l$ is $\tan(t_l)$.
\end{proof}

It seems enough gates are there to have a universal quantum computational model. But the details of how forced measurements would work have not been discussed yet. First, one needs to show that the measurements, even if they fail at times, can in the end succeed to implement efficiently a desired local gate.~
\\
\subsection{Efficient implementation of local gates by forced measurements}\label{subsection:efficient_implementation}~
\\
Assume one wants to apply a unitary gate in $SO(8)$ on an $8-$qubit $\mbbC^8$ which is a composition of projections $P_0P_{a_k}\ldots P_{a_1}P_0$. One could start applying the projections in order and the undesired result at each step is $1-P_{a_i}=P_{-a_i^*/||a_i||^2}$. Assuming this happens, there is an obvious procedure to get to try again the measurement $\{P_{a_i},I-P_{a_i}\}$ until one succeeds: First, we try to project to $P_{a_{i-1}}$. It may succeed and we will have a sequence $P_{a_{i-1}}(I-P_{a_i})P_{a_{i-1}} \propto P_{a_{i-1}}$, therefore getting back where we started. It may not succeed and nevertheless, we will retry $P_{a_i}$, then there are two cases:
\begin{itemize}
    \item Projection is unsuccessful, and we get the sequence
    $$(I-P_{a_i})(I-P_{a_{i-1}})(I-P_{a_i})P_{a_{i-1}}\propto (I-P_{a_{i}})P_{a_{i-1}}.$$
    This means we can start at our first failure.
    \item Projection is successful, then the sequence of projections is 
    $$P_{a_i}(I-P_{a_{i-1}})(I-P_{a_i})P_{a_{i-1}}$$
    which simplifies to 
    $$(\alpha_{i,i-1}^2-1) P_{a_i}P_{a_{i-1}}$$
    which is the desired outcome (notice normalization is allowed as the final unitary gate is considered and $0<\alpha_{i,i+1}<1$).
\end{itemize}
Hence, the only way this process could be unsuccessful is if the angle between $P_{a_i},P_{a_{i-1}}$ is exponentially close to $\frac{\pi}{2}$ or in other words, $\alpha_{i,i-1}$ is exponentially small. This would mean that the composition $(1-P_{a_i})P_{a_{i-1}}$ has to happen exponentially many times before there is a chance of getting a successful outcome.

First, notice that all $a_i$ are chosen from the selection $\{e_i,ce_j,0\}$, which are subsequent projections in (\ref
{Gates}). It needs to be checked for this selection, when exactly the dihedral angle could be exponentially close to $\frac{\pi}{2}$.
\begin{lem}
The dihedral angle between
\begin{itemize}
    \item $P_{e_i}$ and $P_{ce_j}$ is $\frac{\pi}{4}$,
    \item $P_{e_i},P_0$ is $\frac{\pi}{4}$,
    \item $P_{ce_j},P_0$ is $\arccos(\frac{1}{\sqrt{1+c^2}})=\tan^{-1}(|c|)$.
\end{itemize} 
\end{lem}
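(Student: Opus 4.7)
The plan is to invoke Remark \ref{angle_P_1_P_2} together with the explicit formula for $V^{a,b}_{11}$ from the proof of Lemma \ref{1+ab_lem}. Recall that
\[
V^{a,b}_{11}=\frac{I_{d\times d}+A^\dagger B}{(1+\|A\|^2)^{1/2}(1+\|B\|^2)^{1/2}},
\]
and that by the identity derived in (\ref{ab+ba}) (applied to the octonionic representation constructed in this section),
\[
A^\dagger B+B^\dagger A=2(a\cdot b)\,I,
\]
where $a\cdot b$ is the Euclidean inner product on $\mathbb{R}^8$. So the whole computation reduces to chasing these two formulas for the three specific pairs $(a,b)$.

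For the first case $a=e_i$, $b=ce_j$ with $i\neq j$, I would first note $\|A\|^2=1$ and $\|B\|^2=c^2$, and that $a\cdot b=0$ by orthogonality of distinct $e_i$. Then
\[
(I+A^\dagger B)(I+B^\dagger A)=I+(A^\dagger B+B^\dagger A)+A^\dagger BB^\dagger A=I+0+c^2 I=(1+c^2)I,
\]
using that $BB^\dagger=c^2 I$ and $A^\dagger A=I$. Dividing by $(1+\|A\|^2)(1+\|B\|^2)=2(1+c^2)$ gives $V^{a,b}_{11}(V^{a,b}_{11})^\dagger=\tfrac{1}{2}I$, so by Remark \ref{angle_P_1_P_2} the dihedral angle is $\arccos(1/\sqrt 2)=\pi/4$.

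The other two cases are even shorter: with $b=0$ the matrix $B$ vanishes, so $V^{a,0}_{11}=(1+\|A\|^2)^{-1/2}\,I$, whose scalar square root gives angle $\arccos(1/\sqrt{1+\|A\|^2})$. For $a=e_i$ this is $\arccos(1/\sqrt 2)=\pi/4$, and for $a=ce_j$ this is $\arccos(1/\sqrt{1+c^2})$, which equals $\tan^{-1}|c|$ by a right-triangle identity. No real obstacle is expected here; the only thing to be careful about is verifying the identity $A^\dagger B+B^\dagger A=2(a\cdot b)I$ in the non-associative octonionic setting, but this was already established in (\ref{ab+ba}) using the decomposition $A=a_0 I+H_a$ with $H_a$ skew-symmetric, so the lemma follows by direct substitution.
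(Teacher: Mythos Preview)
Your proposal is correct and follows essentially the same approach as the paper: both compute $V^{a,b}_{11}(V^{a,b}_{11})^\dagger$ via Remark \ref{angle_P_1_P_2} and the block formula (\ref{diagonal_blocks}), reducing the first case to the fact that the cross terms $A^\dagger B+B^\dagger A$ vanish and the remaining terms contribute $(1+c^2)I$. The only cosmetic difference is that the paper spells out the specific identities $E_i^\dagger=-E_i$, $E_iE_j=-E_jE_i$ (for $i,j\neq 0$) to get this, whereas you invoke the already-packaged inner-product identity (\ref{ab+ba}); the content is the same.
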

\begin{proof}
The scalar determining the dihedral angle between the two projections is the scalar given by (\ref{angle_P_1_P_2},\ref{diagonal_blocks})
\begin{align}
\frac{(1+cE_j^\dagger E_i)(1+cE_i^\dagger E_j)}{(1+||E_i||^2)(1+||cE_j||^2)}
\end{align}
which using the following identities 
\begin{itemize}
    \item $||E_j||^2=1, E_i^2=-1$ for $i \neq 0$,
    \item $E_i^\dagger =-E_i$ for $i \neq 0$,
    \item $E_iE_j= -E_jE_i$, for $i,j\neq 0$,
\end{itemize}
is equal to $\frac{1}{2}$. Although $P_0,P_{e_i}$ also form a $\frac{\pi}{4}$ angle (choose $c=0$ above), this is not the case for $P_{ce_j}$ and $P_0$, where choosing $0$ instead of $E_i$ above gives $\frac{1}{1+c^2}$.
\end{proof}
Hence, the above issue \textit{mostly} does not arise simply because all angles between $P_{e_i}$ and $P_{ce_j}$ for $i\neq j$ is equal to $\frac{\pi}{4}$, hence $\alpha^2=\frac{1}{2}$ and the chance of successful outcome is always $\frac{1}{2}$. But for any part of the sequence which is of the form $\ldots P_0P_{ce_j} \ldots $ or $\ldots P_{ce_j}P_0\ldots$, we have to deal with this issue.

A key property of our basis $\{h_i\}$ for $\mathfrak{so}(8)$ is that $h_i$s are not only skew-symmetric matrices but also $h_i^2=-I$. This implies that $\exp(th)=\cos(t)+\sin(t)h=\cos(t) (1+\tan(t)h)$, where $h\in \{h_i\}$ behaves just like the imaginary $i$. Hence, using the same notations in \ref{octonion_universality}, for any $t_i$ in $\prod_l \text{exp}(t_lE_{k_l}E_{j_l})$, one can assume $t_i \in [\frac{-\pi}{2},\frac{\pi}{2}]$. And if $\tan(t_i)=c_i$ is exponentially large, then surely $\tan(t_i/2) \le 1$ is not, as $t_i/2 \le \pi/4$. So the gate $\text{exp}(t_iE_{k_i}E_{j_i})$ in $\prod_l \text{exp}(t_lE_{k_l}E_{j_l})$ can be replaced by two copies of $\text{exp}(\frac{t_i}{2}E_{k_i}E_{j_i})$. Then the result would be to use two copies of the projection sequence $P_0P_{c_ie_{j_i}}P_{e_{k_i}}P_0$ where this time, $c_i$ is $\tan(t_i/2)$ which is no longer exponentially large. 

All in all, the probability of failure in applying our desired sequence of projections is exponentially suppressed in polynomially many steps, which is what is sufficient to be able to claim an efficient implementation of an $SO(8)$ gate, like in the topological forced measurement in \cite{bonderson2008measurement}.

We have established the ability to apply efficiently local gates. The last step is to build the general computational model. ~
\\
\subsection{Universal forced measurement model from octonions}\label{subsection:octonion_model}~
\\
Before describing the model, notice that the last step in any quantum computational machine is a destructive measurement. Equiangular projections are precisely designed not to leak information. Hence we are forced to assume the existence of such a measurement in the model. Here, it will be on the $\{\ket{0},\ket{1}\}$ basis for one qubit. Further, there must be an assumption on the initialization of computational state; here, it will be all qubits set to $\ket{0}$. Then the question becomes how to produce the unitary gates involved in a \textbf{BQP} algorithm. We describe two mechanisms.

The simplest model is to assume qubits $\mathbb{C}^2$ on a circle, where gates only act on adjacent qubits by $SU(4)$. An additional qubit is assumed which only makes sure the computation is in $P_0$ subspace by being in the state $\ket{0}$ (or $\ket{1}$ otherwise).

Notice any algorithm in \textbf{BQP} can be realized by acting on adjacent qubits on the circle with $SU(4)$ gates. For any gate $A=A_1+iA_2 \in SU(4)$, there is the following embedding in $SO(8)$
\begin{align}
\begin{pmatrix}
A_1 & -A_2\\
A_2 & A_1
\end{pmatrix},
\end{align}
which agrees with the encoding $\ket{00} \to e_0, \ket{01}\to e_2, \ket{10} \to e_4, \ket{11}\to e_6$ and $i$ (the imaginary) taking each of these to $e_1,e_3,e_5,e_7$. So any $SU(4)$ gate can be generated by a sequence of projections as an $SO(8)$ gate. Thus, given an algorithm in \textbf{BQP} with $SU(4)$ gates on the neighbor qubits, there is a corresponding sequence of projections giving the $SU(4)$ gates.

In the above model as one moves from an adjacent pair to the next pair, there are two actions of $P_{e_i}$ and they do not necessarily agree on the common qubit, unless we change the encoding accordingly.

There is another more complicated model but with a fixed encoding of the octonions on the qubits, with $2n$ qubits on a circle, each qubit connected to its adjacent ones, and one on the center connected to all qubits. We alternatively number each qubit on the circle by $1,2$ and the qubit at the center $3$. Next, consider the encoding of the vector corresponding to $e_i$ in $\mathbb{C}^8$ as the binary representation of $i$, where the $j$-th location is encoded in the state of a qubit labelled by $j$. For example $e_7$ is encoded as $\ket{111}$ where the first $1$ from the right is encoded in the qubit labelled $1$ and the last in the qubit labelled $3$. Therefore, we have a triangulation of the circle and each triangle is a local representation of $e_i$s (as vector) and of the action of $E_i$s, which is exactly as represented in (\ref{octonion_matrix}). An obvious property about these local representations is that they are all the same (by identifying the qubits with the same label), and the restriction of the action of $E_i$ on two qubits, say $1,3$, is the same for the two actions corresponding to the two triangles containing these two qubits.

An additional qubit labelled by $4$ and connected to all qubits is needed. Similar to the previous model, this qubit will represent whether the computation is done inside the $P_0$ subspace or not, where $P_0=\ket{0}\bra{0}$.

We want to show the universality of the model. As mentioned previously, any algorithm \textbf{BQP} can be done by acting on adjacent qubits on the circle with $A\in SU(4)$ gates. These gates are embedded in $SO(8)$ as in the previous model. Let us call the embedding $A_{new}$.

$A_{new}$ is acting on the same two qubits $A$ acted on, plus the third qubit which is at the center of the circle. Basically, the imaginary part of the computation is encoded in the state of the qubit at the center. This is consistent with how $e_i$s were encoded in the qubits. As all $A_{new} \in SO(8)$ can be generated by a sequence of projections (each acting on $4$ qubits labelled $1,2,3,4$), with the state of qubit $4$ playing the role for the subspace $P_0$, the computational model can produce the unitary gates in any \textbf{BQP} algorithm.

\section{Forced measurement computing model}\label{Forced measurement computational model_section}
In this section, an abstract framework of a forced measurement model is proposed. Like in the quantum unitary circuit model, where one assumes a fixed set of implementable local unitary gates that is universal, we would have a similar collection of universal local measurements for a forced-measurement model. These measurements would be applied to some \textit{initial} state according to an efficient algorithm in order to solve a \textbf{BQP} problem. 

The initial state can be a highly entangled state, like cluster states, or they can be simply tensor product states. This state depends on what resources the computational model in question has at its disposal to create the initial state. We consider this part of the computational model as more of a black-box and do not make assumptions on it other than the obvious requirement that the initial state be created efficiently. But, similar to the standard quantum circuit model, we do assume that the Hilbert space has a tensorial structure. Even in the case of topological forced measurement, where the Hilbert space does not have a tensorial structure, by some encoding one can turn the computations into a tensorial setting, similar to how topological quantum computation by braiding is shown to be universal.

Finally, for the destructive measurement that is to be carried at the end of the computation, since equiangularity prohibits leakage of information, one has to assume an additional local projection on a fixed number of qubits which will serve the purpose of the last measurement. This is also a black-box like the initial state, and the projection depends on the resources of the model. In the octonion case, a spin measurement on a single qubit was assumed but the measurement could have been assumed to be a sum of spin measurement on any number of fixed qubits.

Before we give the abstract formulation of a forced measurement model, let us analyze some examples first.
~
\\
\subsection{Measurement-based quantum computation with cluster states}\label{cluster_states_example}~
\\
The measurement model using cluster states was defined in \cite{raussendorf2001one}, where an entangled state is presumed as the initial state, and no site would be affected twice by a measurement (until the very last destructive measurement). Instead, any outcome of a measurement would tell what the next measurement should be on the next \textit{site}. A site is a local spot of the lattice, hence measurements are clearly applied locally from a fixed set of qubit measurements which can be thought of as the \textit{set of local forced measurements}. 

While this may look unlike a forced measurement model, but it is still within that framework. Indeed, it is a more ideal version of a forced measurement where there is no probability (zero) of failing in the measurements, as there is no case of failure \textit{by design}. Notice, this probability is non-zero in the topological or octonion model, but it is exponentially suppressed. 

On the other hand, the cluster state measurement model never tries to go back and \textit{try again} the previous measurement. Further, the measurements applied on the sites are from a set of fixed qubit measurements which are obviously strongly equiangular as lines in $\mathbb{C}^2$ are always so.~
\\
\subsection{Measurement-only topological quantum computation}\label{TQC_example}~
\\
Forced topological quantum measurement was introduced in \cite{bonderson2008measurement}, where braiding, which is the unitary gate in topological quantum computation, is expressed as the composition of three projections. For the details and background on graphical calculus, we refer to \cite{bonderson2009measurement}. 

This model is actually a more complicated version of a forced measurement model like the octonion model. We explain briefly how the computation works. 

The model stores the processed information $\psi$ in the fusion of anyons and uses two ancillas $a$ and its antiparticle $a^*$ to perform forced measurement. Therefore there is a sequence of anyons $a$ with two anyons $a^*,a$ in-between each two anyons.
\begin{figure}[h]
\begin{center}
  \includegraphics[scale=0.4]{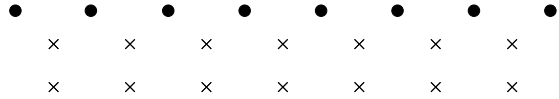}
  \caption{Ancillas denoted by X's are between adjacent computational anyons (denoted by dots). Figure from \cite{bonderson2008measurement}.}
  \label{fig:quasiarray}
\end{center}
\end{figure}

Locally, at any time, the computation is done on the following fusion tree:
\begin{align}\label{psi_information}
\pspicture(-1,1.0)(1.5,2.5)
  \small
  \psset{linewidth=0.9pt,linecolor=black,arrowscale=1.5,arrowinset=0.15}
  \psline(0.0,0.5)(1.2,2)
  \psline(0.0,0.5)(-1.2,2)
  \psline(-0.4,1)(0.4,2)
  \psline(-0.8,1.5)(-0.4,2)
    \psline{->}(-0.4,1)(0.3,1.875)
    \psline{->}(0.8,1.5)(1.1,1.875)
    \psline{->}(-0.8,1.5)(-0.5,1.875)
    \psline{->}(-0.8,1.5)(-1.1,1.875)
    \psline{->}(0,0.5)(-0.3,0.875)
    \psline{->}(0,0.5)(-0.7,1.375)
  \psline(0,-0.5)(0.0,-0.15)
  \psline(0,0.3)(0,0.5)
   \psframe[linewidth=0.9pt,linecolor=black,border=0](-0.25,0.3)(0.3,-0.15)
  \rput[bl]{0}(-0.1,-0.1){$\psi$}
  \rput[bl]{0}(-0.9,1){$c$}
  \rput[bl]{0}(-0.5,0.5){$e$}
  \rput[bl]{0}(-1.35,2.1){$a$}
  \rput[bl]{0}(-0.5,2.1){$a^*$}
  \rput[bl]{0}(0.3,2.1){$a$}
  \rput[bl]{0}(1.15,2.1){$a$}
  \scriptsize
  \rput[bl]{0}(-1.05,1.35){$\alpha$}
  \rput[bl]{0}(-0.65,0.75){$\beta$}
 \endpspicture,
\end{align}
\vskip .5in
where the middle two anyons are the ancillas. The labels $\beta,\alpha$ belong to the Hom spaces $\text{Hom}(e,a\otimes c),\text{Hom}(c,a\otimes a^*)$, respectively. The three projections used to perform the braiding of the first and last anyon $a$ above, each consists of composition of projections which fuse $a,a^*$ to the vacuum. The ultimate result is the following braiding (\cite{bonderson2008measurement}) which is unitary on $\psi$:
\begin{align}
\pspicture(-1,1.0)(1.5,2.5)
  \psset{linewidth=0.9pt,linecolor=black,arrowscale=1.5,arrowinset=0.15}
  \psline(-0.4,0)(0,0.5)
  \psline(0.4,0)(0,0.5)
  \psline(1.2,0)(-1.2,2)
  \psline[border=2.5pt](-1.2,0)(1.2,2)
  \psline(0,1.5)(-0.4,2)
  \psline(0,1.5)(0.4,2)
  \psline{-<}(0,0.5)(0.3,0.125)
  \psline{-<}(0,0.5)(-0.3,0.125)
  \psline{->}(0,1.5)(-0.3,1.875)
  \psline{->}(0,1.5)(0.3,1.875)
  \psline{-<}(-1.2,2)(-0.6,1.5)
  \psline{-<}(1.2,2)(0.6,1.5)
   \rput[bl]{0}(-1.3,2.1){$a$}
   \rput[bl]{0}(-0.5,2.1){$a^*$}
   \rput[bl]{0}(0.35,2.1){$a$}
   \rput[bl]{0}(1.15,2.1){$a$}
   \rput[bl]{0}(-1.3,-0.3){$a$}
   \rput[bl]{0}(-0.5,-0.3){$a^*$}
   \rput[bl]{0}(0.35,-0.3){$a$}
   \rput[bl]{0}(1.15,-0.3){$a$}
 \endpspicture
\end{align}
\vskip 0.5in
Diagrammatically, the fusion to vacuum is presented by the Temperley-Lieb algebra operators $e_i$:
\begin{align}\label{e_i_diagram}
e_i=\frac{1}{d_a}
    \pspicture(-1,0.0)(1,1.5)
  \small
  \psset{linewidth=0.9pt,linecolor=black,arrowscale=1.5,arrowinset=0.15}
  \psline(0.0,0.25)(0.6,1)
  \psline(0.0,0.25)(-0.6,1)
    \psline{->}(0.4,0.75)(0.55,0.9375)
    \psline{->}(-0.4,0.75)(-0.55,0.9375)
  \psset{linewidth=0.9pt,linecolor=black,arrowscale=1.5,arrowinset=0.15}
  \psline(0.0,-0.25)(0.6,-1)
  \psline(0.0,-0.25)(-0.6,-1)
    \psline{-<}(0.4,-0.75)(0.55,-0.9375)
    \psline{-<}(-0.4,-0.75)(-0.55,-0.9375)
  \rput[bl]{0}(-0.7,1.075){$a^*$}
  \rput[bl]{0}(0.6,1.075){$a$}
  \rput[bl]{0}(-0.7,-1.25){$a^*$}
  \rput[bl]{0}(0.6,-1.25){$a$}
 \endpspicture,
\end{align}
\vskip 0.5in
where $d_a$ is the quantum dimension of anyon $a$. These projections satisfy the famous identities 
\begin{align}
e_{i \pm 1} e_i e_{i \pm 1}=\frac{1}{d^2}e_{i\pm 1}, \ \ \  [e_i,e_j]=0, \ \forall |i-j|>1,    
\end{align}
where $d=d_a$ also corresponds to the loop value in the diagrammatic formulation of Temperley-Lieb algebra (\ref{e_i_diagram}), as two $e_i$s are stacked. These are exactly the identities we need for the local forced measurements $e_i$, which are equiangular. Notice they are not strongly equiangular as the similar identity does not hold for $e_i$ replaced by $1-e_i$. Also, the projections acting on disjoint sites commute.

If the projection $e_i$ does not succeed then according to the fusion rules of anyons for $a \otimes a^*= \sum_{i=1}^k N_{aa^*}^{x_i}x_i$ (with $x_0=1$, the vacuum), the outcome is another projection to an anyon $x_i\neq 1$. Therefore the measurements are no longer of the type $\{P,I-P\}$ but with multiple possible outcomes $\{P_1,P_2,\ldots,P_k\}$ where $P_i$ is fusion into $x_i$.
\begin{align}\label{P_i_diagram}
P_i=\frac{\sqrt{d_{x_i}}}{d_a}
    \pspicture(-1,0.0)(1,1.5)
  \small
  \psset{linewidth=0.9pt,linecolor=black,arrowscale=1.5,arrowinset=0.15}
  \psline(0.0,0.25)(0.6,1)
  \psline(0.0,0.25)(-0.6,1)
    \psline{->}(0.4,0.75)(0.55,0.9375)
    \psline{->}(-0.4,0.75)(-0.55,0.9375)
  \psset{linewidth=0.9pt,linecolor=black,arrowscale=1.5,arrowinset=0.15}
  \psline(0.0,-0.25)(0.6,-1)
  \psline(0.0,-0.25)(-0.6,-1)
    \psline{-<}(0.4,-0.75)(0.55,-0.9375)
    \psline{-<}(-0.4,-0.75)(-0.55,-0.9375)
  \psline(0,-0.25)(0,0.25)
  \psline{->}(0,-0.25)(0,0.075)
  \rput[bl]{0}(-0.7,1.075){$a^*$}
  \rput[bl]{0}(0.6,1.075){$a$}
  \rput[bl]{0}(-0.7,-1.25){$a^*$}
  \rput[bl]{0}(0.6,-1.25){$a$}
  \rput[bl]{0}(0.1,-0.05){$x_i$}
 \endpspicture,
\end{align}
\vskip 0.5in

Also, it can be observed that for two consecutive sites on the fusion tree (\ref{psi_information}), the corresponding projections are no longer all pairwise equiangular. Still, the protocol works as equiangularity holds when restricted to a \textit{protected} subspace, as explained below.

This motivates us to relax a requirement that one may have assumed about a forced measurement model: only using equiangular consecutive projections for the computation. Indeed, we need to allow the use of consecutive non-equiangular projections as long as it is used to get us back from where we started in the case of an undesired outcome. But one needs to make sure that even in that case, no information is being leaked from a subspace, called $\mathcal{K}$, where the processed information is kept. Therefore the additional requirement is that the consecutive application of possibly non-equiangular projections $\prod P$ is still a unitary on $P_{\mathcal{K}}$. In other words $P_{\mathcal{K}} (\prod P) P_{\mathcal{K}} \propto (U \oplus I_{\mathcal{K}^\perp}) P_{\mathcal{K}}$ for some unitary $U$, ideally identity, so that it would be easiest to interpret the final outcome of the computation. This is similar to Quantum Turing Machine, where one needs to allow the model to use all the advantages of having \textit{ancillas}.

In the case of \cite{bonderson2008measurement}, the projection $P_{\mathcal{K}}$ is the sum of the projections in  (\ref{fusion_to_c}), as we only care about the fusion of the relevant anyons where the processed information $\psi$ in (\ref{psi_information}) is stored, not the information given by fusion of ancillas.
\begin{align}\label{fusion_to_c}
P_{c}=\frac{\sqrt{d_{c}}}{d_a^2}
    \pspicture(-1.5,.0)(1.75,2.5)
  \small
  \psset{linewidth=0.9pt,linecolor=black,arrowscale=1.5,arrowinset=0.15}
  \psline(0.0,0.5)(1.2,2)
  \psline(0.0,0.5)(-1.2,2)
  \psline(0.0,1.5)(0.4,2)
  \psline(0.0,1.5)(-0.4,2)
    \psline{->}(0.0,1.5)(0.32,1.9)
    \psline{->}(0.8,1.5)(1.1,1.875)
    \psline{->}(0.0,1.5)(-0.32,1.9)
    \psline{->}(-0.8,1.5)(-1.1,1.875)
  \psset{linewidth=0.9pt,linecolor=black,arrowscale=1.5,arrowinset=0.15}
  \psline(0.0,-0.5)(1.2,-2)
  \psline(0.0,-0.5)(-1.2,-2)
    \psline{-<}(0.8,-1.5)(1.1,-1.875)
    \psline{-<}(-0.8,-1.5)(-1.1,-1.875)
    \psline(0.0,-1.5)(0.4,-2)
    \psline(0.0,-1.5)(-0.4,-2)
    \psline{-<}(0.0,-1.5)(-0.32,-1.9)
    \psline{-<}(0.0,-1.5)(0.32,-1.9)
  \psline(0,-0.5)(0,0.5)
  \psline{->}(0,-0.5)(0,0.1)
  \rput[bl]{0}(0.15,-0.2){$c$}
  \rput[bl]{0}(-1.3,2.1){$a$ \,\,\,\,\, $a^*$ \,\,\,\,\, $a$ \,\,\,\,\, $a$ }
  \rput[bl]{0}(-1.3,-2.3){$a$ \,\,\,\,\, $a^*$ \,\,\,\,\, $a$ \,\,\,\,\, $a$ }
 \endpspicture, \text{where } a \otimes a = \sum_c N_{aa}^c c
\end{align}
\vskip 0.8in
As long as projections do not make the first and last anyons to directly or indirectly fuse, e.g. a sequence of projections only on the second, third and last anyon, information is not leaked from $\psi$. Non-equiangular projections are precisely composed in this manner.~
\\
\subsection{Local forced measurement model}~
\\
The above examples guide us towards what the definition of a \textit{local} forced measurement model should be. 

First we need to fix a set of measurements. We start with a set $\mathcal{M}$ of local measurements which may or may not be equiangular. Recall a projective operator-valued measure (POVM) $\mathcal{M}$ over an $n$-dimensional Hilbert space is a set of $m$ positive semi-definite operators like $M_i, 1\le i \le m$, such that 
\begin{align}\label{sum_to_identity}
\sum_{i=1}^m M_i= I_{n\times n}.
\end{align}

\begin{dfn}\label{forced_measurement_graph_dfn}
Define a local forced measurement model as a triple $(\mathcal{M},\mathcal{H},\mathcal{K})$ of POVMs $\mathcal{M}$ acting on a fixed Hilbert space $\mathcal{H}=(\mathbb{C}^2)^{\otimes c}$ with a tensorial structure, and a distinguished subspace $\mathcal{K} \subset \mathcal{H}$ with corresponding projection $P_{\mathcal{K}}$.
\end{dfn}

A measurement does not always have the desired outcome. Hence, it is necessary to establish what is meant by a sequence of \textit{adaptive} measurements which would give the desired outcome.
\begin{dfn}
A \textit{sequence of adaptive measurements} is a sequence in which the choice of each measurement is dependent on the previous outcomes and this choice is determined in at most classical polynomial time with respect to the length of the sequence.
\end{dfn}

Notice each sequence has an associated probability as we are performing measurements. As in a computation there is a restriction on the length of the sequence, one has to consider the total probability of \textit{desired} sequences. This would determine if one can do a computation efficiently, similar to the octonion model where the failure probability is suppressed.
\begin{dfn}\label{universal_G}
The local model $(\mathcal{M},\mathcal{H},\mathcal{K})$ is called \textit{universal} if for every unitary gate $U$ on $\mathcal{K}$ and given $\epsilon$, with probability more than $\frac{2}{3}$, one can approximate the following gate up to some scalar  
\begin{align}
(U\oplus I_{\mathcal{K}^\perp})P_{\mathcal{K}},
\end{align}
with error $\epsilon$, by a sequence of adaptive measurements in $\mathcal{M}$, whose length is at most $poly(\frac{1}{\epsilon})$  for some fixed polynomial $poly$.
\end{dfn}

The above definition is very similar to the universal quantum computation basis definition where local unitary gates are used \cite{kitaev1997quantum}.
\begin{rmk}
For the octonions, \textit{exact} generation of the unitary gates was achieved. Hence, one can first generate exactly a well-known set of universal unitary gates like the $\{$CNOT, Hadamard, $\frac{\pi}{4}$-phase-shift, $\frac{\pi}{2}$-phase-shift$\}$ gates, and as these satisfy the property above (\cite{kitaev1997quantum}), the octonions give a universal local model.
\end{rmk}
\begin{rmk}
Although the adaptive sequence is not assumed to come from equiangular projections, but it is usually the case that $\mathcal{M}$ has a collection of equiangular projections as seen in the examples \ref{cluster_states_example} and \ref{TQC_example}. These projections are the ones that give the unitary gates. Hence, equiangular projections can be seen as the most likely tool to be used during a forced measurement computation, and strongly equiangular projections as the most ideal tool.
\end{rmk}

As in any definition of a computational model, there is some flexibility in the definition. For example, we can embed $\mathcal{H}$ isometrically inside a bigger space and still have the same measurements with (non-)equiangularity preserved. So the dimension of the qubits factor for $\mathcal{H}$ can be any constant $>1$. 

As discussed before, the initial state and measurement part of the computational model are black-boxes and should therefore be analyzed in the specific context. That is why we add the following additional structures to the local forced measurement model:
\begin{dfn}\label{LFMM_dfn}
A local forced measurement machine (LFMM), conveniently called by its set of local measurements $\mathcal{M}$, consists of 
\begin{itemize}
    \item a local forced measurement model $(\mathcal{M},\mathcal{H},\mathcal{K})$,
    \item a measurement process $P_Z$ acting on $\mathcal{H}$ serving the role of final measurement,
    \item a sequence of states $\{\ket{\psi}_n\}_{n\in \mathbb{N}}$ serving the role of the initial state for input of size $n$, with $\ket{\psi}_n \in \mathcal{H}^{\otimes L(n)}$ for some polynomially bounded function $L:\mathbb{N} \to \mathbb{N}$ and a polynomially bounded function $q:\mathbb{N} \to \mathbb{N}$ which describes the preparation time of $\ket{\psi}_n$.
    \item For each $n$, There is a graph $H(n)$ with $cL(n)$ vertices representing qubits in $\mathcal{H}^{\otimes L(n)}=(\mathbb{C}^2)^{\otimes cL(n)}$, and a set of $\{H_i\}$ of subgraphs with $c$ vertices on which projections of $\mathcal{M}$ act. The representation of these projections on $\{H_i(n)\}$ should be \textit{consistent}, meaning their restriction on $H_i(n) \cap H_j(n)$ should be the same no matter from which subgraph restriction is made. This will serve as the architecture of the circuit for inputs of size $n$.
\end{itemize}
\end{dfn}

Next, the forced measurement computation for a local model is defined. We follow the notations used in \cite{arora2009computational}.  $\{0,1\}^*$ means all finite strings in $\{0,1\}$, while $f:\{0,1\}^* \to \{0,1\}$ and $T:\mathbb{N} \to \mathbb{N}$ denote a problem and a time constructible function used to measure the running time of the algorithm. This definition is an analog of a quantum circuit with local unitary gates.

\begin{dfn}\label{LFMM_computation}
Let $f$ be a problem, $T$ a time constructible function, and $\mathcal{M}$ be an LFMM as in \ref{LFMM_dfn}. The problem $f$ is said to be computable in $(T(n)+q(n))$-time for $\mathcal{M}$, if for every input $x$ of size $n$, there is a sequence of adaptive measurements of length at most $T(n)$ such that
\begin{enumerate}
    \item each projection is one of the projections of $\mathcal{M}$ acting on a subgraph $H_i(n) \in \{H_i(n)\}$ of $H(n)$,
    \item no leakage occurs from the subspaces $\mathcal{K}$ of each subgraph,
    \item the sequence followed by the measurement $P_Z$ on one of the $H_i(n)$s gives $f(x)$ with probability at least $\frac{2}{3}$.
\end{enumerate}
\end{dfn}
\begin{rmk}
In the topological model, we mentioned that using non-equiangular projections $\prod P$ should not leak information from the protected space $\mathcal{K}$ or in other words, $\prod P$ is a unitary on $P_\mathcal{K}$. This condition is the item (2) above.
\end{rmk}
\begin{rmk}
There is also a definition for a general forced measurement model, without the local representation restrictions. More precisely, the notion of the subgraphs and their accompanying restriction on the local representations can be entirely removed. This gives a more succinct description of LFMM, an example being the first model in \ref{subsection:octonion_model}. But from a practical point of view, it is the models with the definition above that we have to usually deal with, where measurements can only be applied in a very specific way and there is only one fixed local representation given, as in the second model in \ref{subsection:octonion_model}.
\end{rmk}
We could go even further and omit the fixed set of local forced measurements. In that case, we have to ensure that in the above definition, each projection's description is given by a polynomial-time classical Turing Machine, just like in the description of a Quantum Turing Machine \cite[Def. 10.9]{arora2009computational}. This gives the most general form of forced measurement computation. The definition is an analog of QTM definition \cite[Def. 10.9]{arora2009computational}.
\begin{dfn}
Let $f$ be a problem, $T$ a time constructible function. The problem $f$ is said to be computable in \textit{forced measurement} $T(n)$-time, if for every input $x$ of size $n$, there is a sequence of adaptive measurements of length at most $T(n)$ such that:
 \begin{enumerate}
    \item the initial state is $\ket{x0^{n+T(n)}}$ ($x$ padded with zeros),
    \item each measurement description is given by a fixed classical TM (only dependent on input size $n$) in polynomial-time given the classical input $(1^n,1^{T(n)})$,
    \item the sequence followed by a measurement on the first qubit gives $f(x)$ with probability at least $\frac{2}{3}$.
\end{enumerate}
\end{dfn}

\section{Closely related topics}
In this section, we briefly mention a few closely related topics and leave the detail to the future.~
\\
\subsection{Adiabatic quantum computation}~
\\
Adiabatic Quantum Computation (AQC) is well-known to be universal \cite{aharonov2008adiabatic}. The idea in AQC is that the slow enough evolution of a system allows the state to stay in the desired eigenspace.

In a forced measurement model, one may be able to use the idea of AQC to suppress even more the probability of failure. As an example, as was described in the efficient implementation of local gates in \ref{subsection:efficient_implementation}, instead of replacing the gate $\text{exp}(t_iE_{k_i}E_{j_i})$ in $\prod_l \text{exp}(t_lE_{k_l}E_{j_l})$ by two copies of $\text{exp}(\frac{t_i}{2}E_{k_i}E_{j_i})$, one can replace it with $poly(n)$ copies of $\text{exp}(\frac{t_i}{poly(n)}E_{k_i}E_{j_i})$, thereby lowering the probability of failure from some constant to $\frac{1}{n^r}$ for any fixed $r>0$.

Conversely, AQC can be thought of as a polynomial sequence of very close projections $P_i$ which are nearly equiangular. These projections are the ground space of the Hamiltonians at time $t_i$. Their closeness suppresses exponentially the probability of failure.~
\\
\subsection{SIC-POVM: a possible generalization}~
\\
Recall the definition of POVM \ref{sum_to_identity} for $\mathcal{M} = \{M_i\}_{i=1}^m$ acting on $n$-dimensional Hilbert space. If $m\ge n^2$, and $M_i$ span the linear map space $\mathcal{L}(\mathcal{H})$, then $\{M_i\}$ is called a collection of informationally complete POVM (IC-POVM).
 
A \textit{minimal} IC-POVM happens when $m=n^2$ and it is symmetric (SIC-POVM) (\cite{zauner1999quantum}) if $M_i$ are a $\lambda$-scaled rank one projections such that Hilbert-Schmidt inner product $\text{Tr}(M_iM_j)$ is a constant $c$ for all pairs. $\lambda$ turns out to be $n$ and $c$ turns out to be $\frac{1}{n+1}$. In other words, after normalization:
\begin{align}
\text{Tr}(M_iM_j)=\frac{n\delta_{ij}+1}{n+1}.
\end{align}

Rank one projections are always equiangular. The above definition can be modified using strongly equiangular projections as dihedral angle and the Hilbert Schmidt inner product are related.
\begin{dfn}
A strongly equiangular IC-POVM is a minimal IC-POVM which are strongly equiangular projections up to some scale where all pair-wise dihedral angles are equal, in other words $M_iM_jM_i=\alpha^2M_j$ for a fixed $\alpha$.
\end{dfn}
We know that for equiangular projections $\text{Tr}(PQ)=\text{Tr}(PQQ)=\text{Tr}(QPQ)=\alpha^2 \text{rank}(Q)$. Computing $\text{Tr}((\sum_{i=1}^m M_i)^2)=\text{Tr}(I^2)=n$ by using (\ref{sum_to_identity}), and noting that $\alpha^2$ is the only (non-zero) eigenvalue of $M_iM_j$, we have
\begin{align}
\alpha^2=\frac{(n^2-2)}{2(n^2-1)}.
\end{align}
Notice as $n\to \infty$, $\alpha \to \frac{1}{\sqrt{2}}$, which means the planes would have to meet at $\frac{\pi}{4}$ degrees.
\begin{ques}
For which $d$ there exists strongly equiangular IC-POVM?
\end{ques}~
\\
\subsection{Error correcting codes}~
\\
Consider a set of linear errors $\mathcal{E}$. In most cases, one can assume the errors are generated by Pauli operators. It is well-known that a subspace $\mathcal{C} \subseteq \mathcal{H}$ is a quantum error correcting code if
$$ C(E)=\bra{\psi}E\ket{\psi},\  \forall E \in \mathcal{E}, \ket{\psi} \in \mathcal{C},$$
where $C(E)$ does not depend on $\ket{\psi}$. A stronger version of the above identity is when $C(E)P_{\mathcal{C}}=P_{\mathcal{C}}EP_{\mathcal{C}}$.

For example in the perturbed Toric code \cite{kitaev2003fault}, or generally codes based on topological phases of matter (see e.g. \cite{bonderson2013quasi}) on a lattice, one uses the fact that the local errors can be corrected as $P_1 E P_2$ is exponentially small for projections $P_1 \neq P_2$ on two different orthogonal states in the code. In other words, $EP_{\mathcal{C}}$ and $P_{\mathcal{C}}$ are two planes with all dihedral angles almost equal and in fact very close to $\frac{\pi}{2}$ as the lattice size increases.

Therefore a \textit{weaker} notion of equiangularity, where only one of the equations $PQP=\alpha^2 P,QPQ=\alpha^2 Q$ holds, is a desirable condition for a quantum error correcting code. Except that in this context, $Q$ is not a projection but perhaps a Pauli operator.~
\\
\subsection{Equiangular projections by optimal packings}~
\\
A source for a discrete collection of strongly equiangular planes is the solution to the problem of optimal packing in Grassmannian manifolds. It was shown in \cite{shor1998family} that an optimal family of $m^2+m-2$, $\frac{m}{2}$-dimensional planes exists in $\mathbb{R}^m$, when $m$ is a power of two.

A family of planes which is an optimal packing family is expected to achieve optimal minimum \textit{distance}, where distance between two planes is defined by 
$$\sqrt{\sum_{i=1}^{k} \sin(\theta_i)^2},$$
where $\theta_i$ are the dihedral angles between the two planes.

One would wonder if it is possible to obtain $O(2)$ gates using the optimal packing family for $m=4$; notice the planes live in $\mathbb{R}^4$, so it is not possible to aim for $U(2)$. The planes in an optimal packing are not mutually strongly equiangular, and every plane is only strongly equiangular to $\frac{m^2}{2}$ of the other $m^2+m-3$ planes. Further, the dihedral angle is always $\frac{\pi}{4}$. 

Each plane can be represented by a $\frac{m}{2}\times m$ matrix  with rows being an orthogonal basis for the plane. Let us call these matrices $\{M_k\}$. For $m=4$, they are all given in \cite[Eq. (3)]{shor1998family}. The matrices are seen to have $\pm1,0$ entries with rows having the same norm: either $1$ or $\sqrt{2}$. The projections onto these planes is given by $\{P_k=\frac{1}{\alpha_k^2}M_k^\dagger M_k\}$ where $P_0=M_0^\dagger M_0$ is
\begin{align}
    \begin{pmatrix}
    I_{2\times 2} & 0 \\
    0  & 0
    \end{pmatrix}.
\end{align}
Next, a sequence $P_0P_kP_{k-1}\ldots P_1P_0$ can be computed as
\begin{align}
\frac{1}{\prod_{j=1}^k \alpha_j^2}M_0^\dagger(M_0M_k^\dagger)\ldots (M_1M_0^\dagger)M_0,
\end{align}
where each $\frac{1}{\alpha_j\alpha_{j-1}}M_jM_{j-1}^\dagger$ is precisely the $2 \times 2$ unitary up to scale matrix in Remark \ref{angle_P_1_P_2}. Indeed, there exist always $m \times m$ unitaries $U_j$ such that $\frac{1}{\alpha_j}M_j=M_0U_j^\dagger$. Therefore, $P_j=U_jM_0^\dagger M_0 U_j^\dagger$ being consecutively strongly equiangular implies the top left $2\times 2$ block of $U_j^\dagger U_{j-1}$ is a unitary up to a scale which is in fact $M_0U_j^\dagger U_{j-1}M_0^\dagger=\frac{1}{\alpha_j\alpha_{j-1}}M_jM_{j-1}^\dagger$. 

Now, take any two equiangular planes $M_k,M_j$ in \cite[Eq. (3)]{shor1998family}. The product $M_kM_j^\dagger$ is the dot product of each of the row vectors in $M_k$ and $M_j$. If two rows are non zero at the same locations, then their dot product must be zero (otherwise it means the planes $M_k,M_j$ share a vector, so not equiangular). For example, they must be $(0++0),(0+-0)$. And if they share a single entry then their dot product is $\pm1$. This means the product $M_kM_j^\dagger$ is a matrix with  $\pm 1,0$ entries. Therefore the gates are just a multiple of an orthogonal matrix made of $\pm1,0$, there are all either reflections, (or composed with a) rotation of 45 or 90 degrees, and can only generate a finite subgroup of $O(2)$. 

The general case remains unsolved:
\begin{ques}
Can we get a universal quantum computer for $m>4$?
\end{ques}
Many of the facts mentioned for $m=4$ holds for general $m$ by a simple induction using the recursive definition of optimal packings described in \cite{shor1998family}. It can be shown all rows of matrices $M_k$ have only $\pm1,0$ entries and the number of $\pm1$ entries is the same for all rows for any matrix $M_k$. Hence, the projection $P_k$ can be defined as $\frac{1}{\alpha_k^2}M_k^\dagger M_k$ as the norm of all rows is the same $\alpha_k \in \mathbb{N}$. Then, it can be also demonstrated that any unitary matrix we get from $M_kM_j^\dagger$ is a multiple of a matrix made of $\pm1 ,0$. 

Although for $m=4$, these do not generate a dense subgroup, in general, orthogonal matrices made of $\pm1,0$ can generate $U(\frac{m}{4})$. Indeed, $U(\frac{m}{4})$ embeds into $O(\frac{m}{2})$ as we did for the embedding $U(4) \hookrightarrow O(8)$ in the case of octonions. Further, the group generated by Toffoli gate $T$, Hadamard $H$, and $\frac{\pi}{2}-$phase-shift $E=\binom{1 \ 0}{0 \ i}$ is dense in the unitary group, and all are a multiple of a matrix made of $\pm 1,0$ after the embedding $T \in O(16),H \in O(4),E \in O(4)$.

By taking $m=32$ (and $m=8$), it can be shown that one can obtain $T$ (and $H,E$) using some matrix product $M_kM_j^\dagger$. So what remains to prove is that one can produce any sequence $U_r \ldots U_1$ of these gates.

The unsolved issue is that choosing any matrix $M_k$ restricts the next choices for the projections; recall the same issue in \ref{octonion_universality}. We have not been able to get $T,H,E$ in a way that they be composable, e.g. to get $E$ from a sequence $P_1 \ldots P_k$ and $H$ from $Q_1\ldots Q_{k'}$ and have $Q_{k'},P_1$ equiangular (and $P_k,Q_1$ equiangular).

\section{Chemical Protection}~
\\
A motivation for this paper was the idea that the authors learned from M.~ Fisher, albeit in a very different context \cite{fisher2015quantum}. The idea is that the binding of two molecules can, in some circumstances, implements a projection within their shared nuclear spin Hilbert space. To understand this principle, in a simple context, first consider the isomers of molecular hydrogen $H_2$. They are parahydrogen and orthohydrogen according to the spin state: singlet or triplet, respectively, or the two proton spins:
\begin{align}
\mathbb{C}^2\otimes \mathbb{C}^2 \cong \text{Singlet} \oplus \text{Triplet},
\end{align}
\begin{align}
\text{Singlet}=\ket{\uparrow \downarrow} - \ket{\downarrow \uparrow}, \ \text{Triplet} = \ket{\uparrow \uparrow} \oplus \ket{\downarrow \downarrow} \oplus (\ket{\uparrow \downarrow} + \ket{\downarrow \uparrow}).
\end{align}
The wave function $\psi$ of $H_2$ has both spacial and spin tensor factors which together must obey Fermi statistics under exchange of the protons. This implies that the angular momentum quantum number $\ell$ must be even for parahydrogen and odd for orthohydrogen: orthohydrogen cannot stop tumbling. In fact, it is experimentally observed in liquid hydrogen that, decay processes gradually this kinetic energy to heat as ortho decays to parahydrogen.

Now, fancifully, assume that there was some chemical reason to bond two $H_2$'s side by side, and simultaneously we were able to restrict the angular momentum of each $H_2$ to be the spin of $\ell=0$ and $\ell=1$:
\begin{align}
 \pspicture(-0.5,-0.1)(0.5,0.1)
  \small
  \psset{linewidth=0.9pt,linecolor=black,arrowscale=1.5,arrowinset=0.15}
  \psline(0.0,0.225)(1,0.225)
  \psline(0.0,-0.225)(1,-0.225)
  \pscircle(-0.125,0.225){0.125}
  \pscircle(-0.125,-0.225){0.125}
  \pscircle(1.125,0.225){0.125}
  \pscircle(1.125,-0.225){0.125}
\endpspicture
\ \ \ \ \ \ \ 
\longrightarrow
\pspicture(-0.5,-0.1)(0.5,0.1)
  \small
  \psset{linewidth=0.9pt,linecolor=black,arrowscale=1.5,arrowinset=0.15}
  \psline(0.0,0.125)(1,0.125)
  \psline(0.0,-0.125)(1,-0.125)
  \pscircle(-0.125,0.125){0.125}
  \pscircle(-0.125,-0.125){0.125}
  \pscircle(1.125,0.125){0.125}
  \pscircle(1.125,-0.125){0.125}
\endpspicture
\ \ \ \ \ \ : \ H_2 +H_2 \longrightarrow H_4,
\end{align}
Then it seems reasonable to assume that such a binding would project to the sector where the two angular momentums agree and hence on spin-space it would implement a projection 
\begin{align}
    P: (\mathbb{C}^2)^{\otimes 4} \to \text{Singlet} \otimes \text{Singlet} \oplus \text{Triplet}\otimes \text{Triplet},
\end{align}
which has rank $10$. Similarly, failure to bind would implement the complementary projection $I-P$ of rank $6$. Now imagine a highly controlled gas of (our modified) $H_2$'s, where any pair of molecules can at our instruction be brought together and allowed to bind or not bind, effecting the projections:
\begin{align}
    \text{bind}:\ P, \ \text{not bind}:\ I-P.
\end{align}
If binding occurs we would quickly alter the chemical environment to pull them apart. In this way we can imagine a computer which operates on spin-space by a sequence of observed projections $\{P,I-P\}$ applied at our choice to any sequence of projection (of course one could contemplate parallelizing the sequence to the extent that the pairs be brought together are non-overlapping). 

This computer would be a poor one. We see no equiangularity (even when restricted to a computational subspace $\mathcal{K}$ as in section \ref{Forced measurement computational model_section}). So there would be leakage and there is no hint, as we see, that universal quantum computing would be possible using this hypothetical gas of dimeric molecules, even given our fanciful assumption for their manipulation.

However, this example can be enhanced in many ways, and we hope to investigate whether realistic enhancements might yield \textbf{chemically protected} quantum computers. Our phrase ``chemically protected'' is a deliberate play on ``topologically protected''.

The idea is that a small molecule may have an interesting symmetry group $G$ ($G \cong \mathbb{Z}_2$ in the $H_2$ example), which acts on some subset of its nuclear spin, spanning $\mathcal{H}_i$ for the $i$-th molecule.

Now, $\mathcal{H}_i$ decomposes to $\oplus_{k} \mathcal{H}_{i,k}$ as a sum of irreducible $G$-representations. Binding should correspond to a projection $P$ onto $\sum \mathcal{H}_{i,k}\otimes \mathcal{H}_{j,k'}$ where the sum is taken over pairs $(k,k')$ of irreps compatible with Fermi statistics, as in our example. Thus, binding/not binding implements a projection $P_{i,j}$ or $I-P_{i,j}$ on $\mathcal{H}_i \otimes \mathcal{H}_j$. Note that if multiple inequivalent binding geometries are possible, we should track these with an additional index $P_{i,j}^\alpha$. Another variation of the projections could come from the entanglement between the spins of two molecules which could affect their binding probability.

We ask the question whether a universal quantum computer can be fashioned from these quantum-mechanical projections. The projections $P_{i,j}$ are protected by the rigidity of the small molecule. Deviations from symmetry due to phonons or isotopic variation constitute a source of error. Although there are no exponential scalings as in the theory of topological protection, small molecule rigidity is quite robust and could be expected to be a useful resource. 

The project is \textit{first} to find within the representation theory strong equiangularity (perhaps merely restricted to a computational subspace) within the families $\{P_{i,j}^\alpha\}$, and then \textit{second} translate the representation theoretic solution into chemistry. Many constraints, here, have already been explored \cite{fisher2015quantum}. For example, there are good reasons to use spin$=\frac{1}{2}$ nuclei with ${}^{31}P$ being a prime candidate. The possibility of using Posner molecules has also been explored in \cite{2017arXiv171104801Y}.

\section*{Acknowledgements}

We would like to thank Michael Hopkins for connecting us with the literature on the unstable homotopy groups of the orthogonal group, leading to the results in Remark \ref{S^3_hom} and \ref{SO(n-k)_hom}.

\bibliographystyle{apa}
\bibliography{main}
\address{\textsuperscript{1\label{1}}Station Q, Microsoft Research, Santa Barbara, California 93106, USA and Department of Mathematics, 
University of California, Santa Barbara, California 93106 USA}
\address{\textsuperscript{2\label{2}}Microsoft Station Q and Dept of Mathematics, University of California,
Santa Barbara, CA 93106-6105, U.S.A.} 
\address{\textsuperscript{3\label{3}}Dept of Mathematics, University of California,
Santa Barbara, CA 93106-6105, U.S.A.}
\addresseshere

\end{document}